\newtheorem{problem}{Problem}
\newcommand\ilie{\textcolor{blue}}
\newcommand{\algname}{$\mathsf{ONBRA}$}
\newlist{todolist}{itemize}{2}
\setlist[todolist]{label=$\square$}
  \providecommand\BibTeX{{%
    \normalfont B\kern-0.5em{\scshape i\kern-0.25em b}\kern-0.8em\TeX}}}
\begin{document}

\title{\algname: Rigorous Estimation of the Temporal Betweenness Centrality in Temporal Networks}

\author{Diego Santoro}
\authornote{Both authors contributed equally to this research.}
\affiliation{%
    \department{Department of Information Engineering}
    \institution{University of Padova}
    \city{Padova}
    \country{Italy}
}
\email{diego.santoro@phd.unipd.it}

\author{Ilie Sarpe}
\authornotemark[1]
\affiliation{%
    \department{Department of Information Engineering}
    \institution{University of Padova}
    \city{Padova}
    \country{Italy}
}
\email{sarpeilie@dei.unipd.it}

\begin{teaserfigure}
    \noindent\textit{\enquote{D\`{a}me `n' onbra de vin!}} -- A Venetian asking for some wine.
    \protect\
    \Description{This is not a figure, it is just a quote before the abstract. The quote is \enquote{D\`{a}me `n' onbra de vin!} which is a typical Venetian exclamation}
\end{teaserfigure}


\begin{abstract}
  In network analysis, the betweenness centrality of a node informally captures the fraction of shortest paths visiting that node. The computation of the betweenness centrality measure is a fundamental task in the analysis of modern networks, enabling the identification of the most central nodes in such networks. Additionally to being massive, modern networks also contain information about the \emph{time} at which their events occur. Such networks are often called \emph{temporal networks}. The temporal information makes the study of the betweenness centrality in temporal networks (i.e., \emph{temporal betweenness centrality}) much more challenging than in static networks (i.e., networks without temporal information). Moreover, the \emph{exact} computation of the temporal betweenness centrality is often impractical on even moderately-sized networks, given its extremely high computational cost. A natural approach to reduce such computational cost is to obtain high-quality \emph{estimates} of the exact values of the temporal betweenness centrality.
  In this work we present \algname, the first sampling-based approximation algorithm for estimating the temporal betweenness centrality values of the nodes in a temporal network, providing rigorous probabilistic guarantees on the quality of its output. \algname\ is able to compute the estimates of the temporal betweenness centrality values under two different optimality criteria for the shortest paths of the temporal network. In addition,  \algname\ outputs high-quality estimates with sharp theoretical guarantees leveraging on the \emph{empirical Bernstein bound}, an advanced concentration inequality.
  Finally, our experimental evaluation shows that \algname\ significantly reduces the computational resources required by the exact computation of the temporal betweenness centrality on several real world networks, while reporting high-quality estimates with rigorous guarantees.
\end{abstract}

\begin{CCSXML}
        <ccs2012>
        <concept>
        <concept_id>10002950.10003648.10003671</concept_id>
        <concept_desc>Mathematics of computing~Probabilistic algorithms</concept_desc>
        <concept_significance>500</concept_significance>
        </concept>
         <concept>
        <concept_id>10003752.10003809.10003635</concept_id>
        <concept_desc>Theory of computation~Graph algorithms analysis</concept_desc>
        <concept_significance>500</concept_significance>
        </concept>
        </ccs2012>
\end{CCSXML}

\ccsdesc[500]{Mathematics of computing~Probabilistic algorithms}
\ccsdesc[500]{Theory of computation~Graph algorithms analysis}

\keywords{Temporal networks, Temporal betweenness centrality, Sampling algorithm, Probabilistic analysis}


\maketitle


\section{Introduction}
The study of centrality measures is a fundamental primitive in the analysis of networked datasets \cite{Borgatti2006, Newman2010}, and plays a key role in social network analysis~\cite{Das2018}. A centrality measure informally captures how important a node is for a given network according to \emph{structural} properties of the network. Central nodes are crucial in many applications such as analyses of co-authorship networks~\cite{Liu2005,Yan2009}, biological networks~\cite{Wuchty2003,Koschuetzki2008}, and ontology summarization~\cite{Zhang2007}. 

One of the most important centrality measures is the betweenness centrality~\cite{Freeman1977,Freeman1978}, which informally captures the fraction of \emph{shortest paths} going through a specific node. The betweenness centrality has found applications in many scenarios such as community detection~\cite{Fortunato2010}, link prediction~\cite{Ahmad2020}, and network vulnerability analysis~\cite{Holme2002}. The exact computation of the betweenness centrality of each node of a network is an extremely challenging task on modern networks, both in terms of running time and memory costs. Therefore, sampling algorithms have been proposed to provide provable high-quality approximations of the betweenness centrality values, while remarkably reducing the computational costs~\cite{Riondato2016,Riondato2018,Brandes2007}.

Modern networks, additionally to being large, have also richer information about their edges. In particular, one of the most important and 
easily accessible
information is the \emph{time} at which edges occur. Such networks are often called \emph{temporal networks}~\cite{Holme2019}. The analysis of temporal networks provides novel 
insights compared to the insights that 
would be obtained by the analysis of static networks (i.e., networks without temporal information), 
as, for example, in the study of subgraph patterns~\cite{Paranjape2017,Kovanen2011},  community detection \cite{Lehmann2019}, and network clustering~\cite{Fu2020}. As well as for static networks, the study of the temporal betweenness centrality in temporal networks aims at identifying the nodes  
that are visited by a high number of \emph{optimal} paths ~\cite{Holme2012,Buss2020}. 
In temporal networks, the definition of optimal paths has to consider the information about the timing of the edges, making the possible definitions of optimal paths much more richer than in static networks~\cite{Rymar2021}. 

In this work, a temporal path is valid if it is time respecting, i.e. if all the interactions within the path occur at increasing timestamps (see Figures \ref{subfig:staticPath}-\ref{subfig:temporalShortestPath}). We considered two different optimality criteria for temporal paths, chosen for their relevance~\cite{Holme2012}: (i) shortest temporal path (STP) criterion, a commonly used criterion for which a path is optimal if it uses the minimum number of interactions to connect a given pair of nodes; (ii) restless temporal path (RTP) criterion, for which a path is optimal if, in addition to being shortest, all its consecutive interactions occur at most within a given user-specified time duration parameter $\delta\in\mathbb{R}$ (see Figure \ref{subfig:temporalShortestPath}).
The RTP criterion finds application, for example, in the study of spreading processes over complex networks~\cite{Pan2011}, where information about the timing of consecutive interactions is fundamental. 
The exact computation of the temporal betweenness centrality under 
the STP and RTP optimality criteria becomes impractical (both in terms of running time and memory usage) for even moderately-sized networks. Furthermore, as well as for static networks, obtaining a high-quality approximation of the temporal betweenness centrality of a node is often sufficient in many applications. Thus, we propose \algname, the \emph{first} algorithm to compute rig\underline{O}rous estimatio\underline{N} of temporal \underline{B}etweenness cent\underline{R}ality values in tempor\underline{A}l networks\footnote{\url{https://vec.wikipedia.org/wiki/Onbra}.}, providing sharp guarantees on the quality of its output. As for many data-mining algorithms, \algname's output is function of two parameters: $\varepsilon \in (0,1)$ controlling the estimates' accuracy; and $\eta \in (0,1)$ controlling the confidence. 
The algorithmic problems arising from accounting for temporal information are really challenging to deal with compared to the static network scenario, although \algname\ shares a high-level sampling strategy similar to~\cite{Riondato2018}.
Finally, we show that in practice our algorithm \algname, other than providing high-quality estimates while reducing computational costs, it also enables analyses that cannot be otherwise performed with existing state-of-the-art algorithms. Our main contributions are the following:
\begin{itemize}
    \item We propose \algname, the first sampling-based algorithm that outputs high-quality approximations of the temporal betweenness centrality values of the nodes of a temporal network. \algname\ leverages on an advanced data-dependent and variance-aware concentration inequality to provide sharp probabilistic guarantees on the quality of its estimates. 
    \item We show that \algname\ is able to compute high-quality temporal betweenness estimates for two optimality criteria of the paths, i.e., STP and RTP criteria. In particular, we developed specific 
    algorithms for \algname\ to address the 
    computation of the estimates 
    according to 
    such optimality criteria.
    \item We perform an extensive experimental evaluation with several goals: (i) under the STP criterion,  show that studying the temporal betweenness centrality provides novel insights compared to the static version; (ii) under the STP criterion, show that \algname\ provides high-quality estimates, while significantly reducing the computational costs
     compared to the state-of-the-art exact algorithm, and that it enables the study of large datasets that cannot practically be analyzed by the existing exact algorithm; (iii) 
      show that \algname\ is able to estimate the temporal betweenness centrality under the RTP optimality criterion by varying $\delta$.
\end{itemize}

\begin{figure}[t]
	\centering
	\subfloat[]{
		\begin{tabular}{lr}
			\includegraphics[width=.36\linewidth]{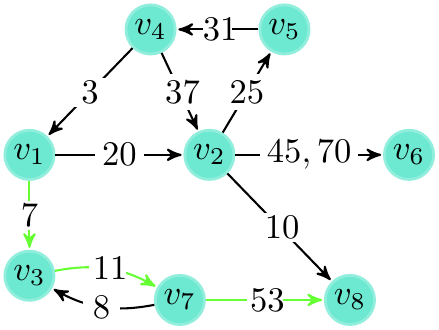} &  \includegraphics[width=.36\linewidth]{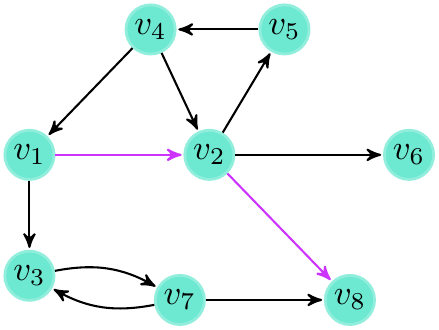} 
		\end{tabular}
		\label{subfig:temporalNetwork}
	}\\
	\subfloat[]{
		\includegraphics[width=.35\linewidth]{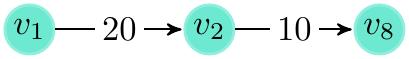}
		\label{subfig:staticPath}
	}
	\subfloat[]{
		\includegraphics[width=.45\linewidth]{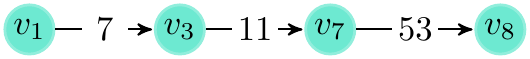}
		\label{subfig:temporalShortestPath}
	}
	\caption{(\ref{subfig:temporalNetwork}): (left) a temporal network $T$ with $n=8$ nodes and $m=12$ edges, (right) its associated static network $G_T$ obtained from $T$ by removing temporal information. A shortest \emph{temporal} path cannot be identified by a shortest path in the static network: e.g., the shortest paths from node $v_1$ to node $v_8$, respectively coloured in green in $T$ and purple in $G_T$, are different. (\ref{subfig:staticPath}): A path that is not time respecting. (\ref{subfig:temporalShortestPath}): A time respecting path that is also shortest in $T$. With $\delta\ge 42$ such path is also shortest $\delta$-restless path.}
	\label{fig:basicdef}
\end{figure}

\section{Preliminaries}
In this section we introduce the fundamental notions needed throughout the development of our work and formalize the problem of approximating the temporal betweenness centrality of the nodes in a temporal network. 

We start by introducing temporal networks. 
\begin{definition}
A \emph{temporal network} $T$ is a pair $T=(V,E)$, where $V$ is a set of $n$ nodes (or vertices), and $E=\{(u,v,t): u,v\in V, u\neq v, t \in \mathbb{R}^+\}$ is a set of $m$ directed edges\footnote{\algname\ can be easily adapted to work on \emph{undirected} temporal networks with minor modifications.}\footnote{W.l.o.g.\ we assume the edges $(u_1,v_1,t_1),\dots,(u_m,v_m,t_m)$ to be sorted by increasing timestamps.}.
\end{definition}

Each edge $e = (u,v,t) \in E$ of the network represents an interaction from node $u \in V$ to node $v \in V$ at time $t$, which is the \emph{timestamp} of the edge. Figure \ref{subfig:temporalNetwork} (left) provides an example of a temporal network $T$.
Next, we define \emph{temporal paths}.
\begin{definition}
Given a temporal network $T$, a \emph{temporal path} $\mathsf{P}$ is a sequence $\mathsf{P}=\langle  e_1=(u_1,v_1,t_1), e_2=(u_2,v_2,t_2), \dots,e_k=(u_k,v_k,t_k) \rangle$ of $k$ edges of $T$ ordered by increasing timestamps\footnote{Our work can be easily adapted to deal with non-strict ascending timestamps (i.e., with $\leq$ constraints).}, i.e., $t_i < t_{i+1}, i \in \{1,\dots, k-1\}$, such that the node $v_i$ of edge $e_i$ is equal to the node $u_{i+1}$ of the consecutive edge $e_{i+1}$, i.e., $v_i=u_{i+1},  i \in \{1,\dots,k-1\}$, and each node $v \in V$ is visited by $\mathsf{P}$ \emph{at most} once. 
\end{definition}
Given a temporal path $\mathsf{P}$ made of $k$ edges, we define its length as $\ell_{\mathsf{P}} = k$. An example of temporal path $\mathsf{P}$ of length $\ell_{\mathsf{P}} = 3$ is given by Figure \ref{subfig:temporalShortestPath}. 
Given a \emph{source} node $s \in V$ and a \emph{destination} node $z \in V, z\neq s$, a \emph{shortest} temporal path between $s$ and $z$ is a temporal path $\mathsf{P_{s,z}}$ of length $\ell_{\mathsf{P}_{s,z}}$ such that in $T$ there is no temporal path $\mathsf{P}_{s,z}'$ connecting $s$ to $z$ of length $\ell_{\mathsf{P}_{s,z}'}<\ell_{\mathsf{P}_{s,z}}$. Given a temporal shortest path $\mathsf{P}_{s,z}$ connecting $s$ and $z$, we define $\mathtt{Int}(\mathsf{P}_{s,z}) = \{w \in V | \ \exists \ (u,w,t) \lor (w,v,t) \in \mathsf{P}_{s,z}, w \neq s,z\} \subset V$ as the set of nodes \emph{internal} to the path $\mathsf{P}_{s,z}$. Let $\sigma_{s,z}^{sh} $ be the number of shortest temporal paths between  nodes $s$ and $z$. Given a node $v \in V$, we denote with $\sigma_{s,z}^{sh} (v)$ the number of shortest temporal paths $\mathsf{P}_{s,z}$ connecting $s$ and $z$ for which $v$ is an internal node, i.e., $\sigma_{s,z}^{sh} (v) = |\{\mathsf{P}_{s,z} | v \in \mathtt{Int}(\mathsf{P}_{s,z})\}|$. Now we introduce the \emph{temporal betweenness centrality}  of a node $v \in V$, which intuitively captures the fraction of shortest temporal paths visiting $v$.
\begin{definition}
We define the \emph{temporal betweenness centrality} $b(v)$ of a node $v \in V$ as
\[
    b(v) = \frac{1}{n(n-1)} \sum_{s,z\in V, \ s\neq z} \frac{\sigma_{s,z}^{sh} (v)}{\sigma_{s,z}^{sh} }.
\]
\end{definition}

Let $B(T) = \{(v,b(v)): v \in V\}$ be the set of pairs composed of a node $v \in V$ and its temporal betweenness value $b(v)$. Since the exact computation of the set $B(T)$ using state-of-the-art exact algorithms, e.g., \cite{Buss2020,Rymar2021}, is impractical on even moderately-sized temporal networks (see Section \ref{sec:exp} for experimental evaluations), 
in our work we aim at providing high-quality approximations of the temporal betweenness centrality values of all the nodes of the temporal network. That is, we compute the set $\tilde{B}(T)= \{(v,\tilde{b}(v)): v \in V\}$, where $\tilde{b}(v)$ is an accurate estimate of $b(v)$, controlled by two parameters $\varepsilon, \eta \in (0,1)$, (accuracy and confidence). We want $\tilde{B}(T)$ to be an \emph{absolute ($\epsilon,\eta$)-approximation set} of $B(T)$, as commonly adopted in data-mining algorithms (e.g., in  \cite{Riondato2018}): that is, $\tilde{B}(T)$ is an approximation set such that
\[
    \mathbb{P}\left[\sup_{v \in V}|\tilde{b}(v)- b(v)| \le \varepsilon\right] \geq 1 - \eta.
\]
Note that in an absolute ($\epsilon,\eta$)-approximation set, for each node $v \in V$, the estimate $\tilde{b}(v)$ of the temporal betweenness value deviates from the actual value $b(v)$ of at most $\varepsilon$, with probability at least $1-\eta$. Finally, let us state the main computational problem addressed in this work.

\begin{problem}
	\label{problem}
    Given a temporal network $T$ and two parameters $(\varepsilon, \eta)\in (0,1)^2$, compute the set  $\tilde{B}(T)$, i.e., an absolute ($\epsilon,\eta$)-approximation set of $B(T)$.
\end{problem}


\section{Related Works}\label{sec:relwork}
Given the importance of the betweenness centrality for network analysis, many algorithms have been proposed to compute it in different scenarios. In this section we focus on those scenarios  most relevant to our work, grouped as follows.

\emph{Approximation Algorithms for Static Networks.} 
Recently, many algorithms to approximate the betweenness centrality in static networks have been proposed, most of them employ randomized sampling approaches~\cite{Riondato2016,Riondato2018,Brandes2007}. The existing algorithms differ from each other mainly for the sampling strategy they adopt and for the probabilistic guarantees they offer. Among these works, the one that shares similar ideas to our work is \cite{Riondato2018} by Riondato and Upfal, where the authors proposed to sample pairs of nodes $(s,z)\in V^2$, compute all the shortest paths from $s$ to $z$, and update the estimates of the betweenness centrality values of the nodes internal to such paths. The authors developed a suite of algorithms to output an $(\varepsilon,\eta)$-approximation set of the set of betweenness centrality values.
Their work cannot be easily adapted to temporal networks. In fact, static and temporal paths in general are not related in any way, and the temporal scenario introduces many novel  challenges: (i) computing the optimal temporal paths, and (ii) updating the betweenness centrality values. 
Therefore, our algorithm \algname\ employs the idea of the estimator provided by \cite{Riondato2018}, while using novel algorithms designed for the context of temporal networks. Furthermore, the probabilistic guarantees provided by our algorithm \algname\ leverage on the variance of the estimates, differently from~\cite{Riondato2018} that used bounds based on the Rademacher averages. 
Our choice to use a variance-aware concentration inequality is motivated by the recent interest in
 providing sharp guarantees employing the \emph{empirical variance} of the estimates~\cite{Cousins2021, Pellegrina2021}.

\emph{Algorithms for Dynamic Networks.} In this setting the algorithm keeps track of the betweenness centrality value of each node for every timestamp $t_1,\dots,t_m$ observed in the network~\cite{Lee2012,Hanauer2021}.
Note that this is extremely different from estimating the temporal betweenness centrality values in temporal networks.
In the dynamic scenario the paths considered are \emph{not} required to be time respecting. For example, in the dynamic scenario, if we consider the network in Figure \ref{subfig:temporalNetwork} (left) at any time $t>20$, the shortest path from $v_1$ to $v_8$ is the one highlighted in purple in Figure \ref{subfig:temporalNetwork} (right). Instead, in the temporal setting such path is not time respecting. 
We think that it is very challenging to adapt the algorithms for dynamic networks to work in the context of temporal networks, which further motivates us to propose
\algname. 

\emph{Exact Algorithms for Temporal Networks.} Several exact approaches have been proposed in the literature~\cite{Tsalouchidou2020,Alsayed2015,Kim2012}. 
The algorithm most relevant to our work was presented in~\cite{Buss2020}, where
the authors extended the well-known Brandes algorithm~\cite{Brandes2001} to the temporal network scenario considering the STP criterion (among several other criteria). 
They showed that the time complexity of their algorithm is $O(n^3(t_{m}-t_{1})^2)$, which is often impractical on even moderately-sized networks. Recently, \cite{Rymar2021} discussed conditions on temporal paths under which the temporal betweenness centrality can be computed in polynomial time, showing a general algorithm running in $O(n^2m(t_m-t_1)^2)$ even under the RTP criterion, which is again very far from being  practical on modern networks. 

We conclude by observing that, to the best of our knowledge, no approximation algorithms exist for estimating the temporal betweenness centrality in temporal networks.
\section{Method, Algorithm, and Analysis}
\label{sec:method_algorithm_analysis}
In this section we discuss \algname, our novel algorithm for computing high-quality approximations of the temporal betweenness centrality values of the nodes of a temporal network. We first discuss the sampling strategy used in \algname, then we present the algorithm, and finally we show the theoretical guarantees on the quality of the estimates of \algname. 

\subsection{\algname\ - Sampling Strategy} In this section we discuss the sampling strategy adopted by \algname\ that is independent of the optimality criterion of the paths. However, for the sake of presentation, we discuss the sampling strategy for the STP-based temporal betweenness centrality estimation. 

\algname\ samples \emph{pairs} of nodes $(s,z)$ and computes all the shortest temporal paths from $s$ to $z$. More formally, let $\mathcal{D} = \{(u,v)\in V^2: u\neq v \}$, and $\ell \in \mathbb{N}, \ell\ge2$ be a user-specified parameter. \algname\ first collects $\ell$ pairs of nodes $(s_i,z_i)_i,i=1,\dots,\ell$, sampled uniformly at random from $\mathcal{D}$. Next, for each pair $(s,z)$ it computes $\mathcal{P}_{s,z}=\{\mathsf{P}_{s,z}: \mathsf{P}_{s,z} \text{ is shortest}\}$, i.e., the set of shortest temporal paths from $s$ to $z$. Then, for each node $v \in V$ s.t.\ $\exists \mathsf{P}_{s,z} \in \mathcal{P}_{s,z}$ with $v \in \mathtt{int}(\mathsf{P}_{s,z})$, i.e., for each node $v$ that is internal to a shortest temporal path of $\mathcal{P}_{s,z}$, \algname\ computes the estimate $\tilde{b}'(v)= \sigma_{s,z}^{sh}(v) / \sigma_{s,z}^{sh}$, which is an unbiased estimator of the temporal betweenness centrality value $b(v)$ (i.e., $\mathbb{E}[\tilde{b}'(w)] = b(v)$, see Lemma \ref{lemma:unbiased} in Appendix \ref{app:proofs}). 
Finally, after processing the $\ell$ pairs of nodes randomly selected, \algname\ computes for each node $v \in V$ the (unbiased) estimate $\tilde{b}(v)$ of the actual temporal betweenness centrality $b(v)$ by averaging $\tilde{b}'(v)$ over the $\ell$ sampling steps: $\tilde{b}(v) = 1/\ell \sum_{i=1}^\ell \tilde{b}'(v)_i $, where $\tilde{b}'(v)_i$ is the estimate of $b(v)$ obtained by analyzing the $i$-th sample, $i\in [1,\ell]$. We will discuss the theoretical guarantees on the quality of $\tilde{b}(v)'s$ in Section \ref{subsec:theoguar}.


\begin{algorithm}[t]
    \DontPrintSemicolon
    \LinesNumbered 
    \SetKwInOut{Input}{input}
    \SetKwInOut{Output}{output}
    \SetKwFunction{GetSample}{uniformRandomSample}
    \SetKwFunction{ModifiedSP}{SourceDestinationSTPComputation}
    \SetKwFunction{UpdateEST}{updateSTPEstimates}
    \SetKwComment{tcp}{//}{}
    \KwIn{Temporal network $T=(V,E)$, $\eta\in(0,1), \ell \ge 2$}
    \KwOut{Pair $(\varepsilon',\tilde{B}(T))$ s.t. $\tilde{B}$ is an absolute $(\varepsilon', \eta)$-approximation set of $B(T)$.}
    $\mathcal{D}\leftarrow\{(u,v)\in V\times V, u\neq v\}$\label{algline:samplespace}\;
    $\tilde{B}_{v,:} \gets \vec{0}_\ell, \forall v\in V$\label{algline:estimatesmatrix}\;
    \For{$i\gets 1$ \KwTo $\ell$} {\label{algline:mainloop}
        $(s,z)\leftarrow$ \GetSample{$\mathcal{D}$}\label{algline:getsample}\;
        \ModifiedSP{$T,s,z$}\label{algline:spcomp}\;
        \If{\emph{reached}$(z)$\label{algline:ifReachedDest}} {
            \UpdateEST{$\tilde{B}, i$}\label{algline:updateEst}\;
        }
    }
    $\tilde{B}(T) \gets \{(v, 1/\ell \sum_{i=1}^\ell \tilde{B}_{v,i}): v\in V \} $\label{algline:unbiasedComp}\;
    $\varepsilon' \gets \sup_{v\in V} \left\{ \sqrt{\frac{ 2 \mathbf{V}(\tilde{B}_{v,:}) \ln(4n / \eta)}{\ell}} + \frac{7 \ln(4n / \eta)}{3(\ell - 1)} \right\} $\label{algline:compStoppingCond}\;
    
    \Return{$(\varepsilon', \tilde{B}(T))$}\;
    \caption{\algname.}
    \label{alg:static}
\end{algorithm}

\subsection{Algorithm Description}
\subsubsection*{Sampling Algorithm: \algname} \algname\ is presented in Algorithm \ref{alg:static}. In line \ref{algline:samplespace} we first initialize the set $\mathcal{D}$ of objects to be sampled, where each object is a pair of distinct nodes from $V$. Next, in line \ref{algline:estimatesmatrix} we initialize the matrix $\tilde{B}$ of size $|V| \cdot \ell$ to store the estimates of \algname\ for each node at the various iterations, needed to compute their empirical variance and the final estimates. Then we start the main loop (line \ref{algline:mainloop}) that will iterate $\ell$ times. In such loop we first select a pair $(s,z)$ sampled uniformly at randomly from $\mathcal{D}$ (line \ref{algline:getsample}). We then compute all the shortest temporal paths from $s$ to $z$ by executing Algorithm \ref{alg:truncatedstpaths} (line \ref{algline:spcomp}), which is described in detail later in this section. Such algorithm computes all the shortest temporal paths from $s$ and $z$ adopting some pruning criteria to speed-up the computation. If at least one STP between $s$ and $z$ exists (line \ref{algline:ifReachedDest}), then for each node $v \in V$ internal to a path in $\mathcal{P}_{s,z}$ we update the corresponding estimate to the current iteration by computing $\tilde{b}'(v)_i$ using Algorithm \ref{alg:updatepathcounts} (line \ref{algline:updateEst}). While in static networks this step can be done with a simple recursive formula \cite{Riondato2018}, in our scenario we need a specific algorithm to deal with the more challenging fact that a node may appear at different distances from a given source across different shortest temporal paths. We will discuss in detail such algorithm later in this section. 
At the end of the $\ell$ iterations of the main loop, \algname\ computes: (i) the set $\tilde{B}(T)$ of unbiased estimates (line \ref{algline:unbiasedComp}); (ii) and a tight bound $\varepsilon'$ on $\sup_{v \in V}|\tilde{b}(v)- b(v)|$, which leverages the empirical variance $\mathbf{V}(\tilde{B}_{v,:})$ of the estimates (line \ref{algline:compStoppingCond}). We observe that $\varepsilon'$ is such that the set $\tilde{B}(T)$ is an absolute $(\varepsilon', \eta)$-approximation set of $B(T)$. We discuss the computation of such bound in Section \ref{subsec:theoguar}. Finally, \algname\ returns $(\varepsilon',\tilde{B}(T))$.

\begin{algorithm}[t]
    \DontPrintSemicolon
    \SetKwComment{Comment}{$\triangleright$\ }{}
    \LinesNumbered 
    \KwIn{$T=(V,E)$, source node $s$, destination node $z$}
    \For{$v\in V$}
    {
        $\mathsf{dist}_{v} \gets -1$; $\sigma_{v} \gets 0$\label{alglinetr:structdist}\;
    }
    \For{$(u,v,t)\in E$}
    {
        $\sigma_{v,t}\gets 0; P_{v,t}\gets \emptyset; \mathsf{dist}_{v,t} \gets -1$\label{alglinetr:structpaths}\;
    }
    $\mathsf{dist}_s \gets 0$; $\mathsf{dist}_{s,0} \gets 0$\label{alglinetr:initdist}\;
    $\sigma_s \gets 1$; $\sigma_{s,0} \gets 1$; $\mathsf{d}_z^{\min} \gets \infty$\label{alglinetr:initpaths}\;
    $Q \gets$ empty queue; $Q.$enqueue$((s,0))$\label{alglinetr:initdatastruct}\;
    \While{$!Q.$\emph{empty}$()$\label{alglinetr:mainwhile}}
    {
        $(v,t) \gets Q.$dequeue$()$ \label{alglinetr:curpair}\;
        \If{$(\mathsf{dist}_{v,t} < \mathsf{d}_z^{\min})$\label{alglinetr:truncCond}}{
            \For{$(w,t')\in \mathcal{N}^{{+}}(v,t), $\label{alglinetr:forOutNeighTemp}}
            {
                \If{$\mathsf{dist}_{w,t'}=-1$\label{alglinetr:ifNeverReachedAtTime}}
                {
                    $\mathsf{dist}_{w,t'} \gets \mathsf{dist}_{v,t}+1$\label{alglinetr:mindistTime}\; 
                    {\If{$\mathsf{dist}_w=-1$\label{alglinetr:ifNeverReached}}
                        {
                            $\mathsf{dist}_w \gets \text{dist}_{v,t}+1$\label{alglinetr:mindDistToNode}\; 
                            \If{$w=z$}
                            {
                                $\mathsf{d}_z^{\min} \gets \mathsf{dist}_w$\label{alglinetr:mindDistToDest}\;
                            }
                        }     
                    }
                    $Q.$enqueue$((w,t'))$\label{alglinetr:enqueue}\;
                }
                \If{$\mathsf{dist}_{w,t'}= \mathsf{dist}_{v,t}+1$\label{alglinetr:ifShortest}}
                {
                    $\sigma_{w,t'} \gets \sigma_{w,t'} + \sigma_{v,t}$\label{alglinetr:updatePaths}\;
                    $P_{w,t'} \gets  P_{w,t'} \cup \{(v,t)\}$\label{alglinetr:updatePredecessors}\;
                    {\If{$\mathsf{dist}_{w,t'}=\mathsf{dist}_w$}
                        {
                            $\sigma_w \gets \sigma_w+\sigma_{v,t}$\label{alglinetr:updateTotPaths}\; 
                    }}
                }
                
            }
        }
        
    }
    
    \caption{Source-Destination STP computation.}
    \label{alg:truncatedstpaths}
\end{algorithm} 

\subsubsection*{Subroutines} We now describe the subroutines employed in Algorithm \ref{alg:static} focusing on the STP criterion. Then, in Section \ref{sec:RTP_criteria}, we discuss how to deal with the RTP criterion. 

\emph{Source Destination Shortest Paths Computation.} 
We start by introducing some definitions needed through this section. 
First, we say that a pair $(v,t)\in V\times \{t_1,\dots,t_m\}$ is a \emph{vertex appearance} (VA) if $\exists (u,v,t)\in E$. 
Next, given a VA\ $(v,t)$ we say that a VA $(w,t')$ is a \emph{predecessor} of $(v,t)$ if $\exists(w,v,t)\in E, t' < t$. 
Finally, given a VA $(v,t)$ we define its set of \emph{out-neighbouring VAs} as $\mathcal{N}^{{+}}(v,t)=\{(w,t') : \exists (v,w,t')\in E, t<t'\}$.

We now describe Algorithm \ref{alg:truncatedstpaths} that computes the shortest temporal paths between a source node $s$ and a destination node $z$ (invoked in \algname\ at line \ref{algline:spcomp}). 
Such computation  is optimized to prune the search space once found the destination $z$.
The algorithm initializes the data structures needed to keep track of the shortest temporal paths that, starting from $s$, reach a node in $V$, i.e., the arrays $\mathsf{dist}[\cdot]$ and $\sigma[\cdot]$ that contain for each node $v\in V$, respectively, the minimum distance to reach $v$ and the number of shortest temporal paths reaching $v$ (line \ref{alglinetr:structdist}). 
In line \ref{alglinetr:structpaths} we initialize $\mathsf{dist}[\cdot,\cdot]$ that keeps track of the minimum distance of a VA from the source $s$, $\sigma[\cdot, \cdot]$ that maintains the number of shortest temporal paths reaching a VA from $s$, and $P$ keeping the set of predecessors of a VA across the shortest temporal paths explored. 
After initializing the values of the data structures for the source $s$ and $\mathsf{d}_z^{\min}$ keeping the length of the minimum distance to reach $z$ (lines \ref{alglinetr:initdist}-\ref{alglinetr:initpaths}), we initialize the queue $Q$ that keeps the VAs to be visited in a BFS fashion in line \ref{alglinetr:initdatastruct} (observe that, since the temporal paths need to be time-respecting, all the paths need to account for the time at which each node is visited). 
Next, the algorithm explores the network in a BFS order (line \ref{alglinetr:mainwhile}), extracting a VA $(v,t)$ from the queue, which corresponds to a node and the time at which such node is visited, and processing it by collecting its set $\mathcal{N}^{{+}}(v,t)$ of out-neighbouring VAs  (lines \ref{alglinetr:curpair}-\ref{alglinetr:forOutNeighTemp}).
If a VA $(w,t')$ was not already explored (i.e., it holds $\mathsf{dist}_{w,t'}=-1$), then we update the minimum distance $\mathsf{dist}_{w,t'}$ to reach $w$ at time $t'$, the minimum distance $\mathsf{dist}_{w}$ of the vertex $w$ if it was not already visited, and, if $w$ is the destination node $z$, we update $\mathsf{d}_z^{\min}$ (lines \ref{alglinetr:ifNeverReachedAtTime}-\ref{alglinetr:mindDistToDest}). 
Observe that the distance $\mathsf{d}_z^{\min}$ to reach $z$ is used as a \emph{pruning criterion} in line \ref{alglinetr:truncCond} (clearly, if a VA appears at a distance greater than $\mathsf{d}_z^{\min}$ then it cannot be on a shortest temporal path from $s$ to $z$). 
After updating the VAs to be visited by inserting them in $Q$ (line \ref{alglinetr:enqueue}), if the current temporal path is shortest for the VA $(w,t')$  analyzed, we update the number $\sigma_{w,t'}$ of shortest temporal paths leading to it, its set $P_{w,t'}$ of predecessors, and the number $\sigma_w$ of shortest temporal paths reaching the node $w$ (lines \ref{alglinetr:ifShortest}-\ref{alglinetr:updateTotPaths}). 

\emph{Update Estimates: STP criterion.} Now we describe Algorithm \ref{alg:updatepathcounts}, which updates the temporal betweenness estimates of each node internal to a path in $\mathcal{P}_{s,z}$ already computed. 
With Algorithm \ref{alg:truncatedstpaths} we computed for each VA $(w,t)$ the number $\sigma_{w,t}$ of shortest temporal paths from $s$ reaching $(w,t)$.
Now, in Algorithm \ref{alg:updatepathcounts} we need to combine such counts to compute the total number of shortest temporal paths leading to each VA $(w,t)$ appearing in a path in $\mathcal{P}_{s,z}$, allowing us to compute the estimate of \algname\ for each node $w$.

At the end of Algorithm \ref{alg:truncatedstpaths} there are in total $|\mathcal{P}_{s,z}|$ shortest temporal paths reaching $z$ from $s$. 
Now we need to compute, for each node $w$ internal to a path in $\mathcal{P}_{s,z}$ and for each VA $(w,t)$, the number $\sigma_{w,t}^z$ of shortest temporal paths leading from $w$ to $z$ at a time greater that $t$.
Then, the fraction of paths containing the node $w$ is computed with a simple formula, i.e., $\sum_t \sigma_{w,t}^z\cdot \sigma_{w,t}/\sigma_z$, where $\sigma_z =|\mathcal{P}_{s,z}| $. The whole procedure is described in Algorithm \ref{alg:updatepathcounts}. 
We start by initializing $\sigma_{v,t}^z$ that stores for each VA $(v,t)$ the number of shortest temporal paths reaching $z$ at a time greater than $t$ starting from $v$, and a boolean matrix $M$ that keeps track for each VA if it has already been considered (line \ref{alglineup:initdatastruct}). 
In line \ref{alglineup:initqueue} we initialize a queue $R$ that will be used to explore the VAs appearing along the paths in $\mathcal{P}_{s,z}$ in reverse order of distance from $s$ starting from the destination node $z$. 
Then we initialize $\sigma^z_{w,t'}$ for each VA reaching $z$ at a given time $t'$ (line \ref{alglineup:initsigmas}), and we insert each VA in the queue only one time (line \ref{alglineup:initVappeareance}). 
The algorithm then starts its main loop exploring the VAs in decreasing order of distance starting from $z$ (line \ref{alglineup:mainwhile}). 
We take the VA $(w,t)$ to be explored in line \ref{alglineup:dequeue}. 
If $w$ differs from $s$ (i.e., $w$ is an internal node), then we update its temporal betweenness estimate by adding $\sigma_{w,t}^z\cdot \sigma_{w,t}/\sigma_z$ (line \ref{alglineup:updateBetweenness}). 
As we did in the initialization step, then we process each predecessor $(w',t')$ of $(w,t)$ across the paths in $\mathcal{P}_{s,z}$ (line \ref{alglineup:forPred}), update the count $\sigma_{w',t'}^z$ of the paths from the predecessor to $z$ by summing the number $\sigma_{w,t}^z$ of paths passing through $(w,t)$ and reaching $z$ (line \ref{alglineup:UpdatePaths}), and we enqueue the predecessor $(w',t')$ only if it was not already considered (lines \ref{alglineup:inqueuewhile}-\ref{alglineup:enqueue}). 
So, the algorithm terminates by having properly computed for each node $v\in V, v\neq s,z$ the estimate $\tilde{b}'(v)_i$ for each iteration $i\in[1,\ell]$.

\begin{algorithm}[t]
    \DontPrintSemicolon
    \SetKwComment{Comment}{$\triangleright$\ }{}
    \LinesNumbered 
    \KwIn{$\tilde{B}, i$.}
    \For{$(u,v,t)\in E$}
    {
        $\sigma_{v,t}^z\gets 0$; $M_{v,t} \gets \mathsf{False}$\label{alglineup:initdatastruct}\;
    }
    $R \gets$ empty queue; \label{alglineup:initqueue}\;
    \ForEach{$t : (\sigma_{z,t} > 0)$\label{alglineup:fortReachingZ}}
    {
        \For{$(w,t')\in P_{z,t}$\label{alglineup:forPredecessorsZ}}
        {
            $\sigma_{w,t'}^z \gets \sigma_{w,t'}^z+1$\label{alglineup:initsigmas}\;
            \If{$!M_{w,t'}$\label{alglineup:inqueue}}{
                $R$.enqueue($(w,t')$); $M_{w,t'} \gets \mathsf{True} $\label{alglineup:initVappeareance}\;           
            }
        }
    }
    \While{$!R.$\emph{empty}$()$\label{alglineup:mainwhile}}{
        $(w,t) \gets R.$dequeue$()$\label{alglineup:dequeue}\;
        \If{$w\neq s$\label{alglineup:nots}}{
            $\tilde{B}_{w,i} \gets \tilde{B}_{w,i} + \sigma_{w,t}^z \cdot \sigma_{w,t} / \sigma_z$\label{alglineup:updateBetweenness}\;
            \For{$(w',t')\in P_{w,t}$\label{alglineup:forPred}}
            {
                $\sigma_{w',t'}^z \gets \sigma_{w',t'}^z + \sigma_{w,t}^z$\label{alglineup:UpdatePaths}\;
                \If{$!M_{w',t'}$\label{alglineup:inqueuewhile}}{ 
                    $R$.enqueue($(w',t')$); $M_{w',t'} \gets \mathsf{True} $\label{alglineup:enqueue}\;           
                }
            }
        }
        
    }

    \caption{Update betweenness estimates - STP.}
    \label{alg:updatepathcounts}
\end{algorithm} 

\subsection{Restless Temporal Betweenness}
\label{sec:RTP_criteria}
In this section we present the algorithms that are used in \algname\ 
when considering the RTP criterion for the optimal paths to compute the temporal betweenness centrality values.

Recall that, in such scenario, a temporal path $\mathsf{P}=\langle  e_1=(u_1,v_1,t_1), \ e_2=(u_2,v_2,t_2), \dots,e_k=(u_k,v_k,t_k) \rangle$ is considered optimal if and only if $\mathsf{P}$, additionally to being \emph{shortest}, is such that, given $\delta \in \mathbb{R}^+$, it holds  $t_{i+1}\le t_{i}+\delta$ for $i=1,\dots,k-1$.
Considering the RTP criteria, we need to relax the definition of shortest temporal paths and, instead, consider \emph{shortest temporal walks}. Intuitively, a walk is a path where we drop the constraint that a node must be visited at most once. We provide an intuition of why we need such requirement in Figure \ref{fig:twalks}. Given $\delta \in \mathbb{R}^+$, we refer to a shortest temporal walk as \emph{shortest $\delta$-restless temporal walk}.

In order to properly work under the RTP criteria, \algname\ needs novel algorithms to compute the optimal walks and update the betweenness estimates. Note that to compute the shortest $\delta$-restless temporal walks we can use Algorithm \ref{alg:truncatedstpaths} provided that we add the condition $t' -t \leq \delta$ in line \ref{alglinetr:forOutNeighTemp}. 

More interestingly, the biggest computational problem arises when updating the temporal betweenness values of the various nodes on the optimal walks. Note that, to do so, we cannot use Algorithm \ref{alg:updatepathcounts} because it does not account for cycles (i.e, when vertices appear multiple times across a walk).
We therefore introduce Algorithm \ref{alg:updatewalkcounts} (pseudocode in Appendix \ref{app:RTPPseudocode}) that works in the presence of cycles. 
The main intuition behind Algorithm \ref{alg:updatewalkcounts} is that we need to recreate backwards all the optimal walks obtained through the RTP version of Algorithm \ref{alg:truncatedstpaths}.
For each walk we will maintain a set that keeps track of the nodes already visited up to the current point of the exploration of the walk, updating a node's estimate if and only if we see such node for the first time. 
This is based on the simple observation that a cycle cannot alter the value of the betweenness centrality of a node on a fixed walk, allowing us to account only once for the node's appearance along the walk. 

We now describe Algorithm \ref{alg:updatewalkcounts} by discussing its differences with Algorithm \ref{alg:updatepathcounts}.
In line \ref{alglineupwa:initdatastruct}, instead of maintaining a matrix keeping track of the presence of a VA in the queue, we now initialize a matrix $\mathsf{u}[\cdot,\cdot]$ that keeps the number of times a VA is in the queue.
The queue, initialized in line \ref{alglineupwa:initqueue}, keeps elements of the form $\langle\cdot,\cdot\rangle$, where the first entry is a VA to be explored and the second entry is the set of nodes already visited backwards along the walk leading to such vertex appearance. 
While visiting backwards each walk, we check if the nodes are visited for the first time on such walk: if so, we update the betweenness values by accounting for the number of times we will visit such VA across other walks (lines \ref{alglineupwa:ifnotvisited}-\ref{alglineupwa:updatebetween}). 
Next, we update the set of nodes visited (line \ref{alglineupwa:updateset}). 
Finally, we update the count $\sigma_{w',t'}^z$ of the walks leading from the predecessor $(w',t')$ of the current VA $(w,t)$ to $z$ (line \ref{alglineupwa:updatepathstow}), the number $\mathsf{u}_{w',t'}$ of times such predecessor will be visited (line \ref{alglineupwa:updatevisits}), and enqueue the predecessor $(w',t')$ to be explored, together with the additional information of the set $\mathsf{S}'$ of nodes explored up to that point. 

To conclude, note that Algorithm \ref{alg:updatewalkcounts} is more expensive than Algorithm \ref{alg:updatepathcounts} since it recreates all the optimal walks, 
while Algorithm \ref{alg:updatepathcounts} avoids such step given the absence of cycles.

\begin{figure}[t]
    \centering
        \begin{tabular}{lr}
            \includegraphics[width=.4\linewidth]{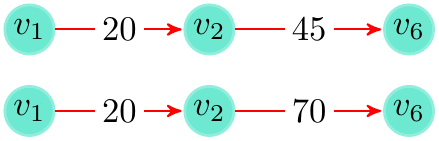} &  \includegraphics[width=.4\linewidth]{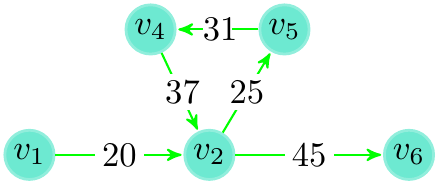} 
        \end{tabular}
        \caption{Considering the temporal network in Figure \ref{fig:basicdef} and $\delta=10$, the paths from node $v_1$ to node $v_6$ on the left are not shortest $\delta$-restless since both violate the timing constraint (i.e., $45-20,70-20 > \delta$). Instead, the walk on the right is shortest and meets the timing constraint with $\delta=10$: so, it is a shortest $\delta$-restless walk.}
    \label{fig:twalks}
\end{figure}

\subsection{\algname\ - Theoretical Guarantees}\label{subsec:theoguar}

In order to address Problem \ref{problem}, \algname\ bounds the deviation between the estimates $\tilde{b}(v)$ and the actual values $b(v)$, for every node $v \in V$. To do so, we leverage on the so called \emph{empirical Bernstein bound}, which we adapted to  \algname.

Given a node $v \in V$, let $\tilde{B}_{v,:} = (\tilde{b}'(v)_1, \tilde{b}'(v)_2, \dots, \tilde{b}'(v)_{\ell})$, where $\tilde{b}'(v)_i$ is the estimate of $b(v)$ by analysing the $i$-th sample, $i \in \{1, \dots, \ell\}$. 
Let $\mathbf{V}(\tilde{B}_{v,:})$ be the \emph{empirical variance} of $\tilde{B}_{v,:}$:
\[
\mathbf{V}(\tilde{B}_{v,:}) = \frac{1}{\ell(\ell-1)} \sum_{1 \leq < i < j \leq \ell} (\tilde{b}'(v)_i-\tilde{b}'(v)_j)^2.
\]

We use the \emph{empirical Bernstein bound} to limit the deviation between $\tilde{b}(v)$'s and $b(v)$'s, which represents Corollary 5 of \cite{maurer2009empirical} adapted to our framework, since Corollary 5 of \cite{maurer2009empirical} is formulated for generic random variables taking values in $[0,1]$ and for an arbitrary set of functions. 

\begin{theorem}[Corollary 5, \cite{maurer2009empirical}]
    \label{th:bound}
    Let $\ell \geq 2$ be the number of samples, and $\eta \in (0,1)$ be the confidence parameter. 
    Let $\tilde{b}'(v)_i$ be the estimate of $b(v)$ by analysing the $i$-th sample, $i \in \{1, \dots, \ell\}$ and $v \in V$. Let $\tilde{B}_{v,:} = (\tilde{b}'(v)_1, \tilde{b}'(v)_2, \dots, \tilde{b}'(v)_{\ell})$, and $\mathbf{V}(\tilde{B}_{v,:})$ be its empirical variance. With probability at least $1-\eta$, and for every node $v \in V$, we have that 
    
    \[
    |\tilde{b}(v) - b(v)| \leq \sqrt{\frac{ 2 \mathbf{V}(\tilde{B}_{v,:}) \ln(4n / \eta)}{\ell}} + \frac{7 \ln(4n / \eta)}{3(\ell - 1)}.
    \]
\end{theorem}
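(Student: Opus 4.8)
The plan is to reduce the statement to the single-function empirical Bernstein bound of Maurer and Pontil \cite{maurer2009empirical} underlying their Corollary~5, and then pay for both the two-sidedness of the deviation and the uniformity over the $n$ nodes with a union bound, choosing the confidence budget so that the logarithmic factor collapses to exactly $\ln(4n/\eta)$.

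First I would fix a node $v \in V$ and check that the collection $\tilde{b}'(v)_1, \dots, \tilde{b}'(v)_\ell$ meets the hypotheses of the underlying inequality. The pairs $(s_i, z_i)$ are drawn independently and uniformly from $\mathcal{D}$, so the estimates $\tilde{b}'(v)_i$ are i.i.d. Each estimate equals either $0$ (when $z_i$ is not reachable from $s_i$) or the ratio $\sigma_{s_i,z_i}^{sh}(v)/\sigma_{s_i,z_i}^{sh}$; since the number of shortest temporal paths through $v$ never exceeds the total number of shortest temporal paths, we have $\tilde{b}'(v)_i \in [0,1]$. By the unbiasedness established in Lemma~\ref{lemma:unbiased} we have $\mathbb{E}[\tilde{b}'(v)_i] = b(v)$, and $\tilde{b}(v) = \frac{1}{\ell}\sum_{i=1}^\ell \tilde{b}'(v)_i$ is precisely the sample mean. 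Finally, the quantity $\mathbf{V}(\tilde{B}_{v,:})$ as defined coincides with the sample variance appearing in \cite{maurer2009empirical}, so no reformulation of the variance term is needed.

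Next I would invoke the one-sided empirical Bernstein bound: for i.i.d. variables in $[0,1]$ and any $\delta \in (0,1)$, with probability at least $1-\delta$ the mean $b(v)$ exceeds the sample mean $\tilde{b}(v)$ by at most $\sqrt{2\mathbf{V}(\tilde{B}_{v,:})\ln(2/\delta)/\ell} + 7\ln(2/\delta)/(3(\ell-1))$. Applying the same inequality to the variables $1 - \tilde{b}'(v)_i$, which again lie in $[0,1]$, have mean $1-b(v)$, and share the same sample variance, controls the deviation in the opposite direction. A union bound over these two events, each at confidence $\delta$, bounds $|\tilde{b}(v) - b(v)|$ by the same expression with probability at least $1-2\delta$.

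Finally I would set $\delta = \eta/(2n)$, so that $\ln(2/\delta) = \ln(4n/\eta)$ and each node's two-sided bound fails with probability at most $2\delta = \eta/n$; a further union bound over the $n$ nodes then yields total failure probability at most $n \cdot (\eta/n) = \eta$, which is exactly the claimed uniform guarantee. The only genuinely delicate point is the bookkeeping of constants, so that the two-sided conversion together with the union over nodes produces precisely the factor $4n$ inside the logarithm; the probabilistic content is entirely supplied by the cited inequality, while verifying the $[0,1]$ range, the independence, and the unbiasedness of the estimates is routine given the sampling scheme and Lemma~\ref{lemma:unbiased}.
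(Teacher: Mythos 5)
Your proof is correct and takes essentially the same route as the paper: the paper simply invokes Corollary 5 of \cite{maurer2009empirical} (a one-sided empirical Bernstein bound, uniform over a finite class of $[0,1]$-valued functions) and remarks that the factor-of-$2$ difference inside the logarithm comes from restating it in symmetric, two-sided form. Your explicit reconstruction---checking the i.i.d.\ $[0,1]$-valued hypotheses and unbiasedness via Lemma~\ref{lemma:unbiased}, symmetrizing by applying the one-sided bound to $1-\tilde{b}'(v)_i$, and union-bounding over the $n$ nodes with budget $\eta/(2n)$ per one-sided event so that $\ln(2/\delta)=\ln(4n/\eta)$---is precisely the adaptation the paper alludes to.
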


The right hand side of the inequality of the previous theorem differs from Corollary 5 of \cite{maurer2009empirical} by a factor of $2$ in the arguments of the natural logarithms, since in \cite{maurer2009empirical} the bound is not stated in the symmetric form reported in Theorem \ref{th:bound}. Finally, the result about the guarantees on the quality of the estimates provided by \algname\ follows.

\begin{corollary}
    \label{cor:onbra}
    Given a temporal network $T$, the pair $(\varepsilon', \tilde{B}(T))$ in output from \algname\ 
    is such that, with probability $>1-\eta$,  it holds that $\tilde{B}(T)$ is an absolute $(\varepsilon',\eta)$-approximation set of $B(T)$. 
\end{corollary}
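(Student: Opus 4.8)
The plan is to derive Corollary \ref{cor:onbra} as an essentially immediate consequence of Theorem \ref{th:bound} combined with the definition of $\varepsilon'$ computed in line \ref{algline:compStoppingCond} of \algname. As a preliminary step I would verify that the hypotheses of Theorem \ref{th:bound} are actually met by the quantities the algorithm produces. Since the $\ell$ pairs $(s_i,z_i)$ are drawn independently and uniformly from $\mathcal{D}$, for a fixed node $v \in V$ the entries of $\tilde{B}_{v,:} = (\tilde{b}'(v)_1,\dots,\tilde{b}'(v)_\ell)$ are i.i.d.\ random variables, each equal to $\sigma^{sh}_{s_i,z_i}(v)/\sigma^{sh}_{s_i,z_i} \in [0,1]$ and with mean $b(v)$ by Lemma \ref{lemma:unbiased}. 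Hence Theorem \ref{th:bound} applies verbatim to $\tilde{B}_{v,:}$ and its empirical variance $\mathbf{V}(\tilde{B}_{v,:})$.

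The core of the argument is then a single invocation of Theorem \ref{th:bound}. Let $\mathcal{E}$ be the event on which the inequality of that theorem holds \emph{simultaneously} for every node $v \in V$; the theorem guarantees $\mathbb{P}[\mathcal{E}] \ge 1-\eta$. On $\mathcal{E}$ we have, for each $v$, that $|\tilde{b}(v)-b(v)|$ is bounded by $\sqrt{2\mathbf{V}(\tilde{B}_{v,:})\ln(4n/\eta)/\ell} + 7\ln(4n/\eta)/(3(\ell-1))$; by the very definition of $\varepsilon'$ as the supremum of exactly this right-hand side over $v \in V$, each per-node bound is at most $\varepsilon'$. Taking the supremum over $v$ of the left-hand side therefore yields $\sup_{v \in V}|\tilde{b}(v)-b(v)| \le \varepsilon'$ on $\mathcal{E}$, so that $\mathbb{P}\!\left[\sup_{v \in V}|\tilde{b}(v)-b(v)| \le \varepsilon'\right] \ge \mathbb{P}[\mathcal{E}] \ge 1-\eta$, which is precisely the definition of $\tilde{B}(T)$ being an absolute $(\varepsilon',\eta)$-approximation set of $B(T)$, thereby settling Problem \ref{problem}.

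I do not expect a genuine obstacle here, since all the heavy lifting is already contained in Theorem \ref{th:bound}; the argument is a reduction rather than a fresh analysis. The two points deserving care are the order of the quantifiers and the data-dependence of $\varepsilon'$. For the first, the factor $4n$ inside the logarithm already absorbs a union over the $n$ nodes, so the per-node bounds hold jointly within the \emph{single} event $\mathcal{E}$ of probability $\ge 1-\eta$, and no additional union bound (which would inflate $\eta$) is required. For the second, one should note that $\varepsilon'$ is itself a random quantity computed from the same $\ell$ samples; the conclusion nonetheless holds because the inclusion $\mathcal{E} \subseteq \{\sup_{v}|\tilde{b}(v)-b(v)| \le \varepsilon'\}$ is verified outcome-by-outcome, the realized $\varepsilon'$ dominating the realized per-node bounds on every sample path in $\mathcal{E}$.
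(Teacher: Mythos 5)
Your proposal is correct and matches the paper's intended argument: the paper gives no explicit proof of Corollary~\ref{cor:onbra}, stating only that it ``follows'' from Theorem~\ref{th:bound}, and your reduction---the $4n$ in the logarithm already covering the union over nodes, $\varepsilon'$ being by construction the supremum of the per-node bounds, and the outcome-by-outcome comparison handling the data-dependence of $\varepsilon'$---is exactly the reasoning being invoked. No gap to report.
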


Observe that Corollary \ref{cor:onbra} is independent of the structure of the optimal paths considered by \algname, therefore such guarantees hold for both the criteria considered in our work. 

\section{Experimental Evaluation}
\label{sec:exp}
In this section we present our experimental  evaluation that has the following goals: (i)  motivate the study of the temporal betweenness centrality by showing two real world temporal networks on which the temporal betweenness provides novel insights compared to the static betweenness computed on their associated static networks; (ii) assess, considering the STP criterion, the accuracy of the \algname's estimates, and the benefit of using \algname\ instead of the state-of-the-art exact approach~\cite{Buss2020}, both in terms of running time and memory usage; (iii) finally, show how \algname\ can be used on a real world temporal network to analyze the RTP-based betweenness centrality values.

\subsection{Setup}
\begin{table}[t]
	\centering
	\caption{Datasets used and their statistics.}
	\label{tab:datasets}
	\scalebox{0.75}{
		\begin{tabular}{ccccl}
			\toprule
			Name& $n$&$m$& Granularity & Timespan\\
			\midrule
			HighSchool2012 (HS) & 180 & 45K & 20 sec & 7 (days)\\
			CollegeMsg & 1.9K & 59.8K & 1 sec & 193 (days)\\
			EmailEu & 986 & 332K & 1 sec & 803 (days)\\
			FBWall (FB) & 35.9K & 199.8K & 1 sec & 100 (days)\\
			Sms & 44K & 544.8K & 1 sec & 338 (days)\\
			Mathoverflow & 24.8K & 390K & 1 sec & 6.4 (years)\\
			Askubuntu & 157K & 727K & 1 sec & 7.2 (years)\\
			Superuser & 192K & 1.1M & 1 sec & 7.6 (years)\\
			\bottomrule
		\end{tabular}
	}
\end{table}

We implemented \algname\ in C++20 and compiled it 
using \texttt{gcc 9}. The code is freely  available\footnote{\url{https://github.com/iliesarpe/ONBRA}.}. All the experiments were performed sequentially on a 72 core Intel Xeon Gold 5520 @ 2.2GHz machine with 1008GB of RAM available. 
The real world datasets we used are described in Table \ref{tab:datasets}, which are mostly social or message networks from different domains. Such datasets are publicly available online\footnote{\url{http://www.sociopatterns.org/} and \url{https://snap.stanford.edu/temporal-motifs/data.html}.}. For detailed descriptions of such datasets we refer to the links  reported and~\cite{Paranjape2017}. To obtain the FBWall dataset we cut the last 200K edges from the original dataset~\cite{Viswanath2009}, which has more than 800K edges. Such cut is done to allow the exact algorithm to complete its execution without exceeding the available memory.

\subsection{Temporal vs Static Betweenness}
In this section we assess that the temporal betweenness centrality of the nodes of a temporal network provides novel insights compared to its static version. To do so, we computed for two datasets, from different domains, the exact ranking of the various nodes according to their betweenness values. The goal of this experiment is to compare the two rankings (i.e., temporal and static) and understand if the relative orderings are preserved, i.e., verify if the most central nodes in the static network are also the most central nodes in the temporal network. To this end, given a temporal network $T=(V,E)$, let $G_T=(V,\{(u,v) : \exists (u,v,t)\in E \})$ be its associated static network. We used the following two real world networks: (i) HS, that is a temporal network representing a face-to-face interaction network among students; (ii) and FB, that is a Facebook user-activity network \cite{Viswanath2009} (see Table \ref{tab:datasets} for further details). 

We first computed the exact temporal and static betweenness values of the different nodes of the two networks. Then, we ranked the nodes by descending betweenness values. We now discuss how the top-$k$ ranked nodes vary from temporal to static on the two networks. We report in Table \ref{tab:topK} (in Appendix \ref{app:suppldata}) the Jaccard similarity between the sets containing the top-$k$ nodes of the static and temporal networks. On HS, for $k=25$, only 11 nodes are top ranked in both the rankings, which means that less than half of the top-25 nodes are central if only the static information is considered. The value of the intersection increases to $36$ for $k=50$, since the network has only 180 nodes. More interestingly, also on the Facebook network only few temporally central nodes can be detected by considering only static information: only 9 over the top-25 nodes and 15 over the top-50 nodes. 
In order to better visualize the top-$k$ ranked nodes, we show their betweenness values in Figure \ref{subfig:topKvals}: note that there are many top-$k$ temporally ranked nodes having small static betweenness values, and vice versa.

These experiments show the importance of studying the temporal betweenness centrality, which provides novel insights compared to the static version.

\begin{figure}
    \centering
        \begin{tabular}{lr}
        	\subfloat[]{ \includegraphics[width=.46\linewidth]{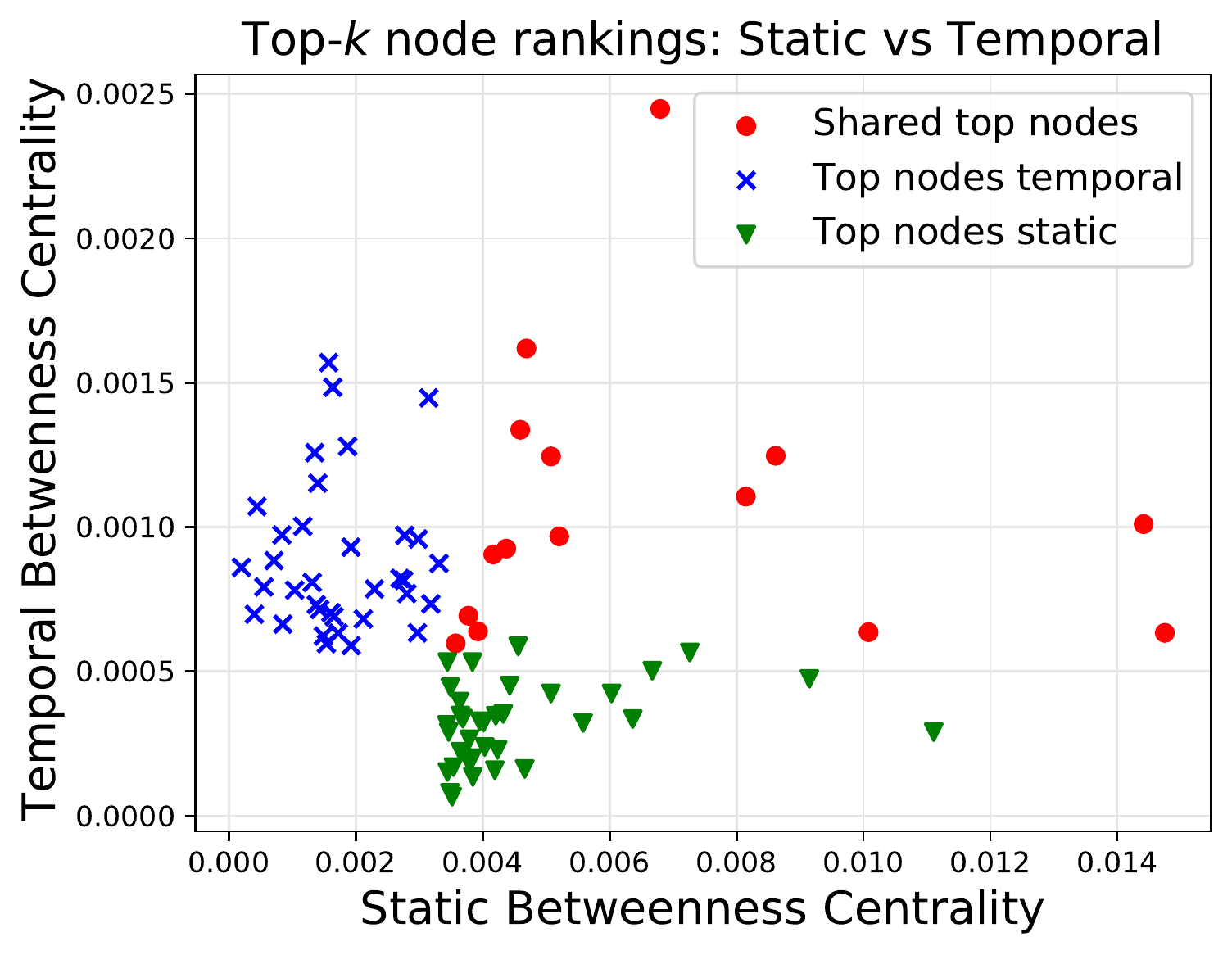} \label{subfig:topKvals}}  
        	\subfloat[]{ \includegraphics[width=.46\linewidth]{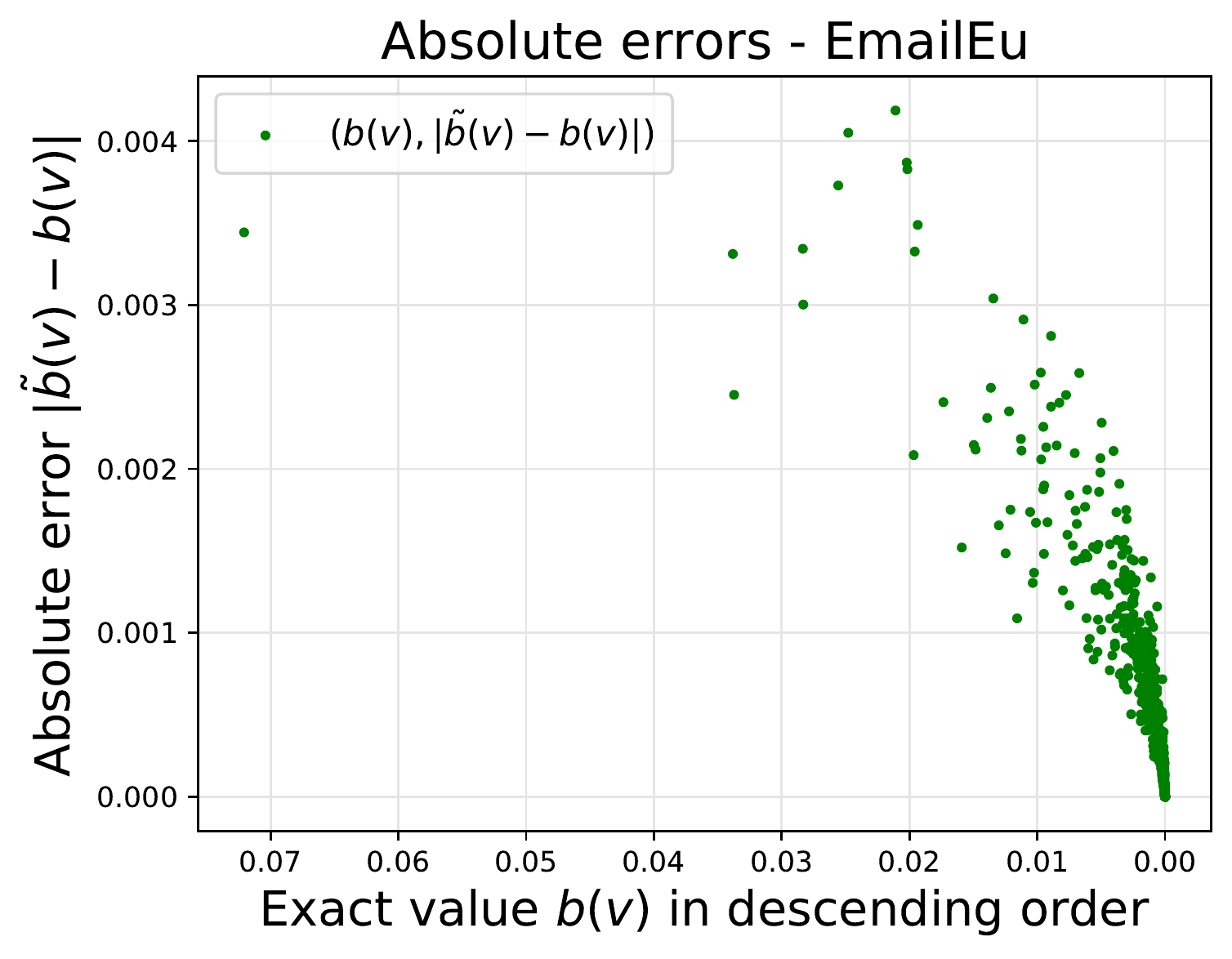} \label{subfig:errors_and_exacts}}\\
        	\subfloat[]{
        	\includegraphics[width=.47\linewidth]{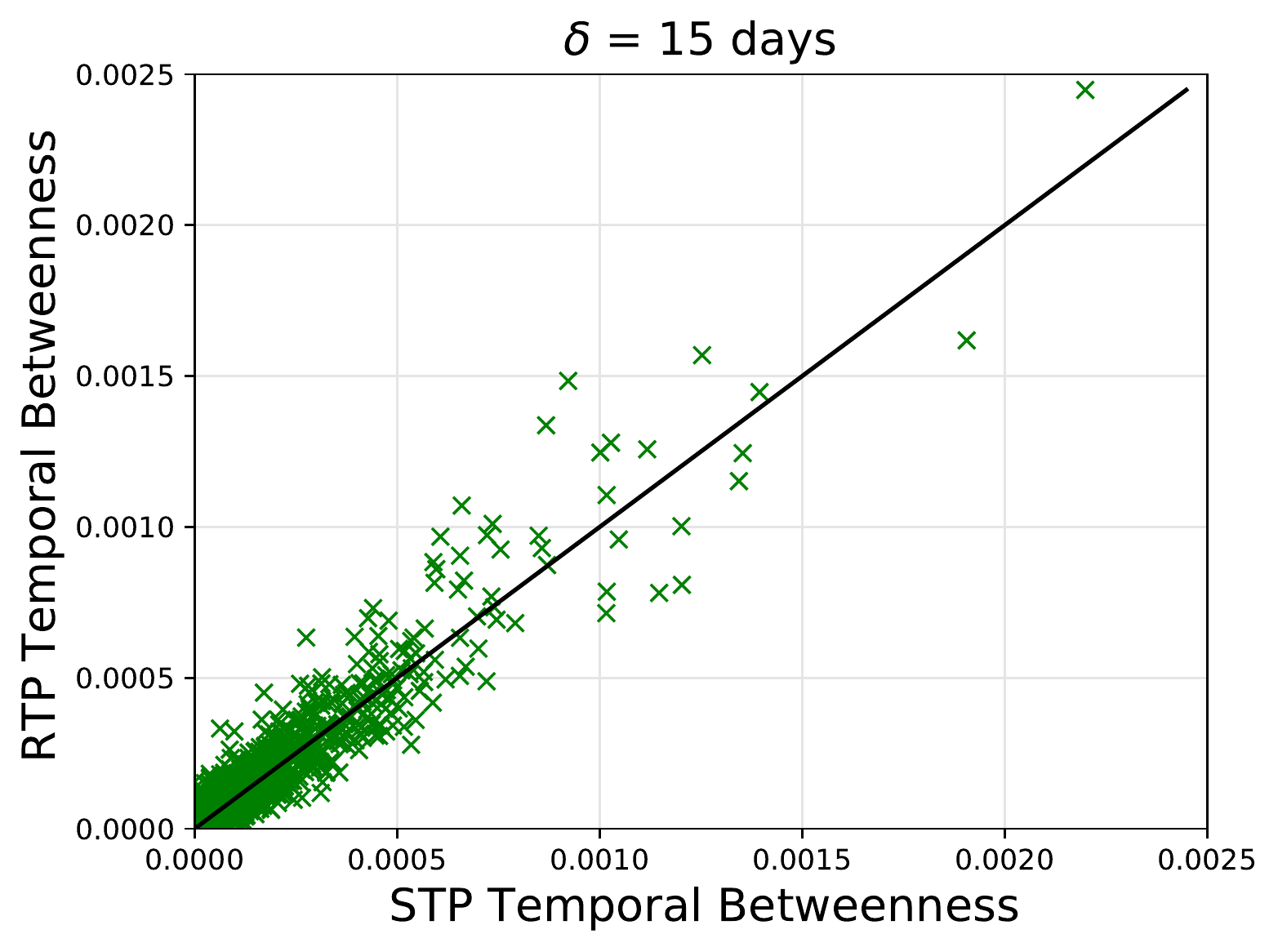}
        	\label{fig:deltavarC}}
        	\subfloat[]{
        	\includegraphics[width=.47\linewidth]{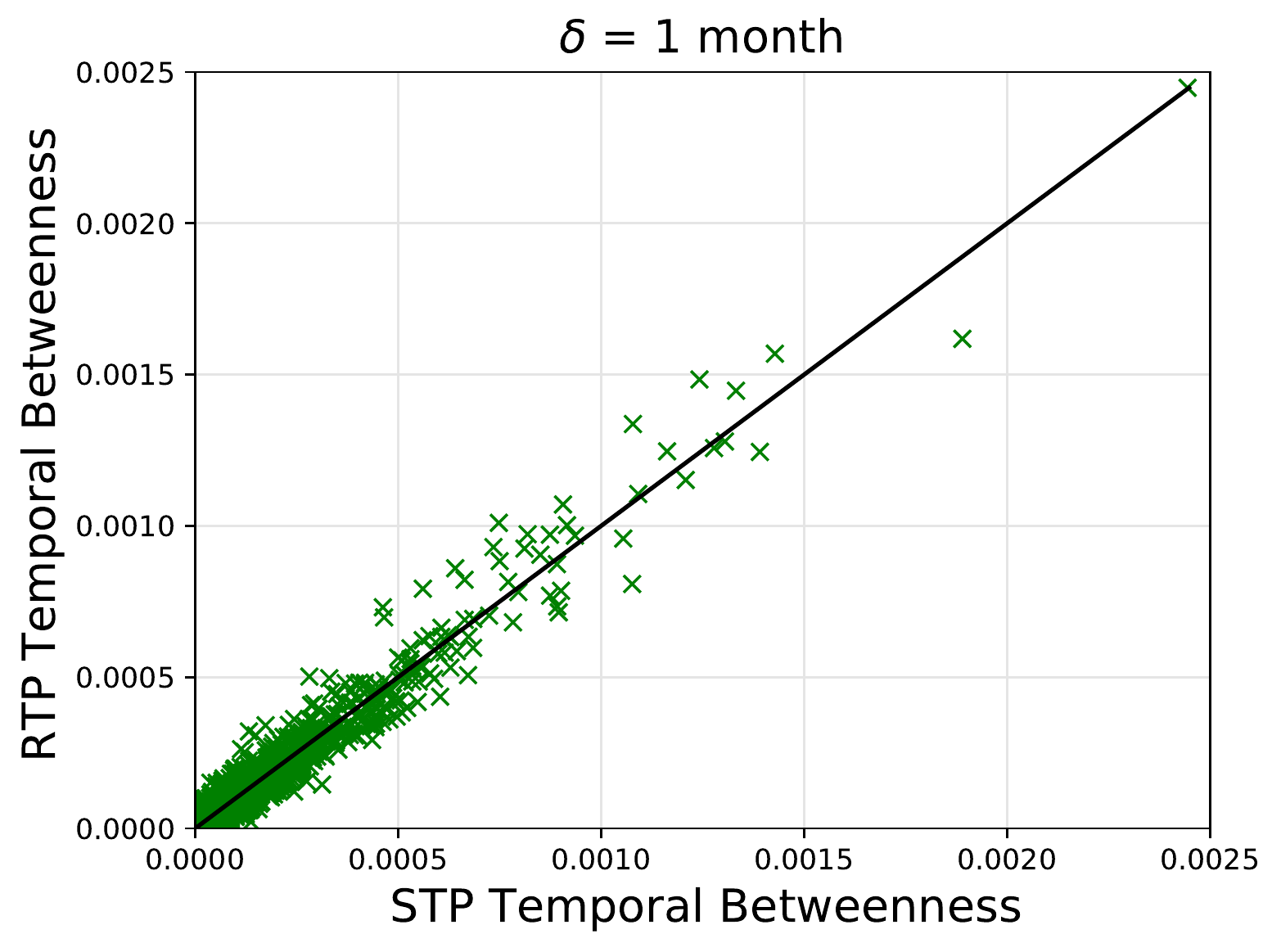}
        	\label{fig:deltavarD}}
        \end{tabular}
    \caption{ (\ref{subfig:topKvals}): static and temporal betweenness values of the top-50 ranked nodes of the dataset FB; (\ref{subfig:errors_and_exacts}): for dataset \texttt{EmailEu}, the deviations (or absolute errors) $|\tilde{b}(v)- b(v)|$ between the estimates $\tilde{b}(v)$ and the actual values $b(v)$ of the temporal betweenness centrality, for decreasing order of $b(v)$; 
     (\ref{fig:deltavarC},\ref{fig:deltavarD}): comparison between the temporal betweenness values based on STP and RTP, for  $\delta$=15 days (left) and $\delta$=1 month (right).}
\end{figure}

\begin{table*}[t]
	\caption{For each dataset, the average and maximum deviation between the estimate $\tilde{b}(v)$ and the actual temporal betweenness value $b(v)$ over all nodes $v$ and $10$ runs, respectively $Avg. \ Error$ and $\sup_{v\in V}|b(v)-\tilde{b}(v)|$, the theoretical upper bound $\varepsilon'$, the $Sample \ rate$ (\%) of pairs of nodes we sampled, the running time $t_{EXC}$ and peak RAM memory $MEM_{EXC}$ required by the exact approach \cite{Buss2020}, the running time $t_{\text{\algname}}$  and peak RAM memory $MEM_{\text{\algname}}$ required by \algname. The symbol \ding{55} denotes that the exact computation of \cite{Buss2020} is not able to conclude on our machine.}
	\label{tab:results}
	\centering
	\scalebox{0.83}{
		\begin{tabular}{ccccccc|cc}
			\toprule
			Dataset & Avg.\ Error & $\sup_{v\in V}|b(v)-\tilde{b}(v)|$ & $\varepsilon'$ & Sample rate (\%) & $t_{\text{EXC}}$ (sec) & $t_{\text{\algname}}$ (sec) & MEM$_{\text{EXC}}$ (GB) & MEM$_{\text{\algname}}$ (GB)\\
			\midrule
			\texttt{CollegeMsg} & $1.74 \cdot 10^{-4}$ & $6.38 \cdot 10^{-3}$ & $2.27 \cdot 10^{-2}$ & 0.083 & 231 & \textbf{148} & 12.0 & \textbf{0.13} \\ 
			\texttt{EmailEu} & $4.69 \cdot 10^{-4}$ & $1.35 \cdot 10^{-2}$ & $6.15 \cdot 10^{-2}$ & 0.093 & 7211 & \textbf{1808} & 23.9 & \textbf{2.1}  \\ 
			\texttt{Mathoverflow} & $6.35 \cdot 10^{-6}$ & $2.1 \cdot 10^{-3}$ & $5.38 \cdot 10^{-3}$ & 0.005 & 79492 & \textbf{36983} & 1004.3 & \textbf{6.8} \\ 
			\texttt{FBWall} & $4.25 \cdot 10^{-6}$ & $5.89 \cdot 10^{-4}$ & $2.13 \cdot 10^{-3}$ & 0.003 & 11489 & \textbf{3145} & 738.0 & \textbf{11.1} \\ 
			\texttt{Askubuntu} & \ding{55} & \ding{55} & $6.92 \cdot 10^{-3}$ & 0.00006 & \ding{55} & \textbf{35585} & $>$1008 & \textbf{20.3}\\ 
			\texttt{Sms} & \ding{55} & \ding{55} & $1.54 \cdot 10^{-3}$ & 0.00231 & \ding{55} & \textbf{13020}&$>$1008 & \textbf{16.2}\\ 
			\texttt{Superuser} & \ding{55} & \ding{55} & $1.02 \cdot 10^{-2}$ & 0.00003 & \ding{55} & \textbf{41856} &$>$1008 & \textbf{16.7}\\  
			\bottomrule
		\end{tabular}
	}
\end{table*}

\subsection{Accuracy and Resources of \algname}

In this section we first assess the accuracy of the estimates $\tilde{B}(T)$ provided by \algname\ considering only the STP criterion, since for the RTP criterion no implemented exact algorithm exists. Then, we show the reduction of computational resources induced by \algname\ compared to the exact algorithm in~\cite{Buss2020}. 

To assess \algname's accuracy and its computational cost, we used four datasets, i.e., \texttt{CollegeMsg},  \texttt{EmailEu}, \texttt{Mathoverflow}, and \texttt{FBWall}. We first executed the exact algorithm, and then we fix $\eta=0.1$ and $\ell$ properly for \algname\ to run within a fraction of the time required by the exact algorithm. The results we now present, which are described in detail in Table \ref{tab:results}, are all averaged over 10 runs (except for the RAM peak, which is measured over one single execution of the algorithms). 

Remarkably, even using less than $1\%$ of the overall pairs of nodes as sample size, \algname\ is able to estimate the temporal betweenness centrality values with very small average deviations between $4 \cdot 10^{-6}$ and $5 \cdot 10^{-4}$, while obtaining a significant running time speed-up between $\approx$$1.5\times$ and $\approx$$4\times$ with respect to the exact algorithm \cite{Buss2020}. 
Additionally, the amount of RAM memory used by \algname\ is significantly smaller than the exact algorithm in~\cite{Buss2020}: e.g., on the \texttt{Mathoverflow} dataset \algname\ requires only $6.8$ GB of RAM peak, which is $147\times$ less than the $1004.3$ GB required by the exact state-of-the-art algorithm~\cite{Buss2020}. 
Furthermore, in all the experiments we found that the maximum deviation is distant at most one order of magnitude from the theoretical upper bound $\varepsilon'$ guaranteed by Corollary \ref{cor:onbra}. Surprisingly, for two datasets (\texttt{EmailEu} and \texttt{Mathoverflow}) the maximum deviation and the upper bound $\varepsilon'$ are even of the same order of magnitude. Therefore we can conclude that the guarantees provided by Corollary \ref{cor:onbra} are often very sharp. In addition, \algname's accuracy is demonstrated by the fact that the deviation between the actual temporal betweenness centrality value of a node and its estimate obtained using \algname\ is about one order of magnitude less than the actual value, as we show in Figure \ref{subfig:errors_and_exacts} and Figure \ref{fig:fig_appendix} (in Appendix \ref{app:suppldata}).

Finally, we show in Table \ref{tab:results} that on the large datasets \texttt{Asku\-buntu}, \texttt{Sms}, and 	\texttt{Superuser} the exact algorithm \cite{Buss2020} is not able to conclude the computation on our machine (denoted with \ding{55}) since it requires more than 1008GB of RAM. Instead, \algname\ provides estimates of the temporal betweenness centrality values in less than $42$K (sec) and $21$ GB of RAM memory. 

To conclude, \algname\ is able to estimate the temporal betweenness centrality with high accuracy providing rigorous and sharp guarantees, while significantly reducing the computational resources required by the exact algorithm in \cite{Buss2020}. 

\subsection{\algname\ on RTP-based Betweenness}
In this section we discuss how \algname\ can be used to analyze real world networks by estimating the centrality values of the nodes for the temporal betweenness under the RTP criterion.

We used the FB network, on which we computed a tight approximation of the temporal betweenness values ($\varepsilon'<10^{-4}$) of the nodes for different values of $\delta$, i.e., $\delta$=1 day, $\delta$=15 days, and $\delta$=1 month. For $\delta$=1 day, we found only 4 nodes with temporal betweenness value different from 0, which is surprising since it highlights that the information spreading across wall posts through RTPs in 2008 on Facebook required more than 1 day of time between consecutive interactions (i.e., slow spreading). We present the results for the other values of $\delta$ in Figures \ref{fig:deltavarC} and \ref{fig:deltavarD}, comparing them to the (exact) STP-based betweenness. Interestingly, 15 days are still not sufficient to capture most of the betweenness values based on STPs of the different nodes, while with $\delta$=1 month the betweenness values are much closer to the STP-based values. While this behaviour is to be expected with increasing $\delta$, finding such values of $\delta$ helps to better characterize the dynamics over the network.

To conclude, \algname\ also enables novel analyses that cannot otherwise be performed with existing tools.
\section{Discussion}
In this work we presented \algname, the first algorithm that provides high-quality approximations of the  temporal betweenness centrality values of the nodes in a temporal network, with rigorous probabilistic guarantees. \algname\ works under two different optimality criteria for the paths on which the temporal betweenness centrality is defined: shortest and restless temporal paths (STP, RTP) criteria. To the best of our knowledge, \algname\ is the first algorithm enabling a practical computation under the RTP criteria.
Our experimental evaluation shows that 
\algname\ provides high-quality estimates with tight guarantees, while  remarkably reducing the computational costs compared to the state-of-the-art in \cite{Buss2020}, enabling analyses that would not otherwise be possible to perform.

Finally, several interesting directions could be explored in the future, such as dealing with different optimality criteria for the paths, and employing sharper concentration inequalities to provide tighter guarantees on the quality of the estimates. 
\begin{acks}
Part of this work was supported by the Italian Ministry of Education, University and Research (MIUR), under PRIN Project n. 20174LF3T8 \enquote{AHeAD} (efficient Algorithms for HArnessing networked Data) and the initiative \enquote{Departments of Excellence} (Law 232/2016), and by University of Padova under project \enquote{SID 2020: RATED-X}.
\end{acks}

\newpage

\bibliographystyle{ACM-Reference-Format}
\bibliography{biblio}


\newpage
\appendix
\section{Missing Proofs}\label{app:proofs}
\begin{lemma} 
	\label{lemma:unbiased}
	Let $v \in V$, then $\tilde{b}'(v)$ is an unbiased estimator of $b(v)$.
\end{lemma}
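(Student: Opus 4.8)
The plan is to exploit the fact that, by construction, $\tilde{b}'(v)$ is a deterministic function of the single pair $(s,z)$ drawn uniformly at random from $\mathcal{D} = \{(u,w) \in V^2 : u \neq w\}$, and that $|\mathcal{D}| = n(n-1)$. First I would fix an arbitrary node $v \in V$ and make explicit that, conditioned on the sampled pair being $(s,z)$, the estimator takes the value $\tilde{b}'(v) = \sigma_{s,z}^{sh}(v)/\sigma_{s,z}^{sh}$, and that, by uniformity, each pair $(s,z) \in \mathcal{D}$ is selected with probability $1/(n(n-1))$.

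Next I would compute the expectation directly via the law of total expectation over the $n(n-1)$ equally likely outcomes:
\[
\mathbb{E}[\tilde{b}'(v)] = \sum_{(s,z) \in \mathcal{D}} \mathbb{P}[(s,z) \text{ sampled}] \cdot \frac{\sigma_{s,z}^{sh}(v)}{\sigma_{s,z}^{sh}} = \frac{1}{n(n-1)} \sum_{s,z \in V, \ s \neq z} \frac{\sigma_{s,z}^{sh}(v)}{\sigma_{s,z}^{sh}},
\]
which is exactly the definition of $b(v)$. Since $v$ was arbitrary, this already yields $\mathbb{E}[\tilde{b}'(v)] = b(v)$ for every $v \in V$, so the estimator is unbiased.

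The computation is elementary; the only point requiring care — and hence the only real obstacle — is the treatment of pairs $(s,z)$ for which no shortest temporal path exists, i.e.\ $\sigma_{s,z}^{sh} = 0$, so that the summand $\sigma_{s,z}^{sh}(v)/\sigma_{s,z}^{sh}$ is a priori undefined. I would resolve this by adopting the convention $0/0 = 0$, which is precisely the behaviour of Algorithm~\ref{alg:static}: the estimates are updated (line~\ref{algline:updateEst}) only when the destination $z$ is reached, i.e.\ only when at least one shortest temporal path is found, so the contribution of an unreachable pair to $\tilde{b}'(v)$ is genuinely $0$. As the same convention is implicit in the definition of $b(v)$, both sides of the equality above treat such pairs identically, and the identity is exact. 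This completes the argument.
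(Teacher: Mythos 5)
Your proof is correct and takes essentially the same route as the paper's: both compute $\mathbb{E}[\tilde{b}'(v)]$ as the uniform average of $\sigma_{s,z}^{sh}(v)/\sigma_{s,z}^{sh}$ over the $n(n-1)$ pairs in $\mathcal{D}$, the paper phrasing this via Bernoulli indicators $X_{sz}$ and linearity of expectation while you invoke the law of total expectation directly, which is the same calculation. Your explicit treatment of unreachable pairs through the $0/0 = 0$ convention (matching the guard at line~\ref{algline:ifReachedDest} of Algorithm~\ref{alg:static} and the implicit convention in the definition of $b(v)$) is a detail the paper's proof leaves unstated, and you resolve it correctly.
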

\begin{proof}
	Let $X_{s z}$ be a Bernoulli random variable that takes value $1$ if the pair of nodes $(s,z)$ is sampled, and $0$ otherwise. Since $\mathbb{E}[X_{sz}]= 1/(n(n-1))$, then by the linearity of expectation,
	\[
	\mathbb{E}[\tilde{b}'(v)] = \frac{1}{n(n-1)} \sum_{s,z\in V\\s\neq z} \frac{\sigma_{s,z}^{sh} (v)}{\sigma_{s,z}^{sh} } \frac{\mathbb{E}[X_{sz}]}{1/n(n-1)}= b(v).
	\]
\end{proof}

\section{Missing Algorithms}\label{app:RTPPseudocode}
In this Section we present Algorithm \ref{alg:updatewalkcounts}, used to compute the temporal betweenness values estimates of the various nodes under the RTP criterion. This is discussed in details in Section \ref{sec:RTP_criteria}.
\begin{algorithm}[h]
    \DontPrintSemicolon
    \SetKwComment{Comment}{$\triangleright$\ }{}
    \LinesNumbered 
    \KwIn{$\tilde{B},i$.}
    \For{$(u,v,t)\in E$}
    {
        $\sigma_{v,t}^z\gets 0$; $\mathsf{u}_{v,t}\gets 0$
        \label{alglineupwa:initdatastruct}\;
    }
    $R \gets$ empty queue; 
    \label{alglineupwa:initqueue}\;
    \ForEach{$t : \sigma_{z,t} > 0$\label{alglineupwa:fortReachingZ}}
    {
        
        \For{$(w,t')\in P_{z,t}$\label{alglineupwa:forPredecessorsZ}}
        {
            $R$.enqueue($\langle(w,t'), \{z\}\rangle$); 
            \label{alglineupwa:initVappeareance}\;           
        }
    }
    \While{$!R.$\emph{empty}$()$\label{alglineupwa:mainwhile}}{
        $\langle(w,t),\mathsf{S}\rangle \gets R.$dequeue$()$\label{alglineupwa:dequeue}\;
        \If{$w\neq s$\label{alglineupwa:nots}}{
            \If{$w\notin \mathsf{S}$\label{alglineupwa:ifnotvisited}}
            {
                $\tilde{B}_{w,i} \gets \tilde{B}_{w,i} + \sigma_{w,t}^z \cdot \sigma_{w,t} / (\sigma_z \cdot \mathsf{u}_{w,t})$\label{alglineupwa:updatebetween}\;
            }
            $\mathsf{S}' \gets \mathsf{S} \cup \{w\}$\label{alglineupwa:updateset}\;
            \For{$(w',t')\in P_{w,t}$\label{alglineupwa:forPred}}
            {
                $\sigma_{w',t'}^z \gets \sigma_{w',t'}^z + \sigma_{w,t}^z/\mathsf{u}_{w,t}$ \label{alglineupwa:updatepathstow}\;
                $\mathsf{u}_{w',t'} \gets \mathsf{u}_{w',t'}+1$\label{alglineupwa:updatevisits}\;
                
                $R$.enqueue($\langle(w',t'), \mathsf{S}'\rangle$); 
                \label{alglineupwa:enqueue}\;           
            }
        }
        
    }

    \caption{Update betweenness estimates - RTP.}
    \label{alg:updatewalkcounts}
\end{algorithm}

\section{Supplementary Data}\label{app:suppldata}
\begin{figure}[b]
	\centering
	\begin{tabular}{lr}
		\subfloat[]{ \includegraphics[width=.49\linewidth]{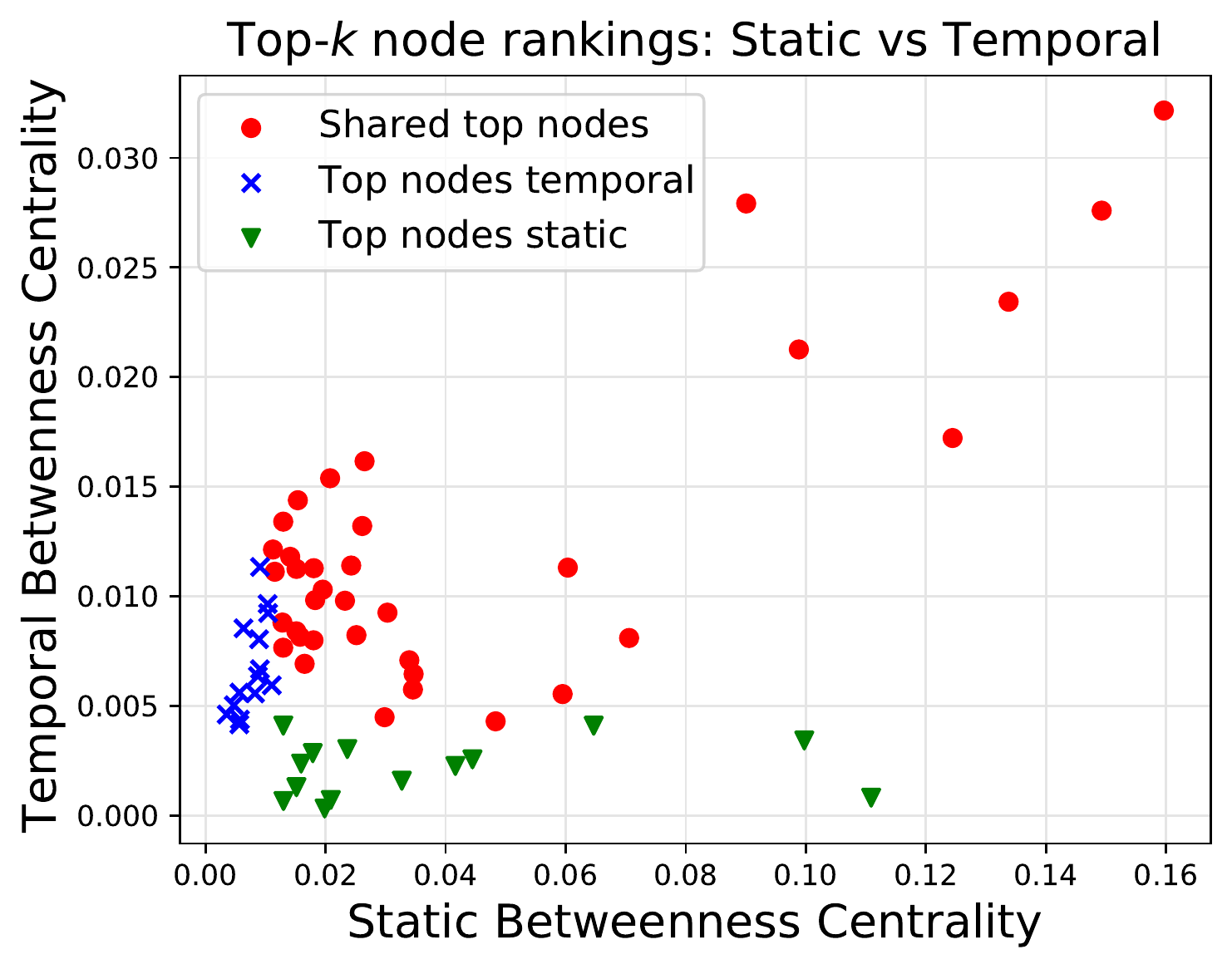} \label{subfig:topKvals_appendix}} \subfloat[]{\includegraphics[width=.49\linewidth]{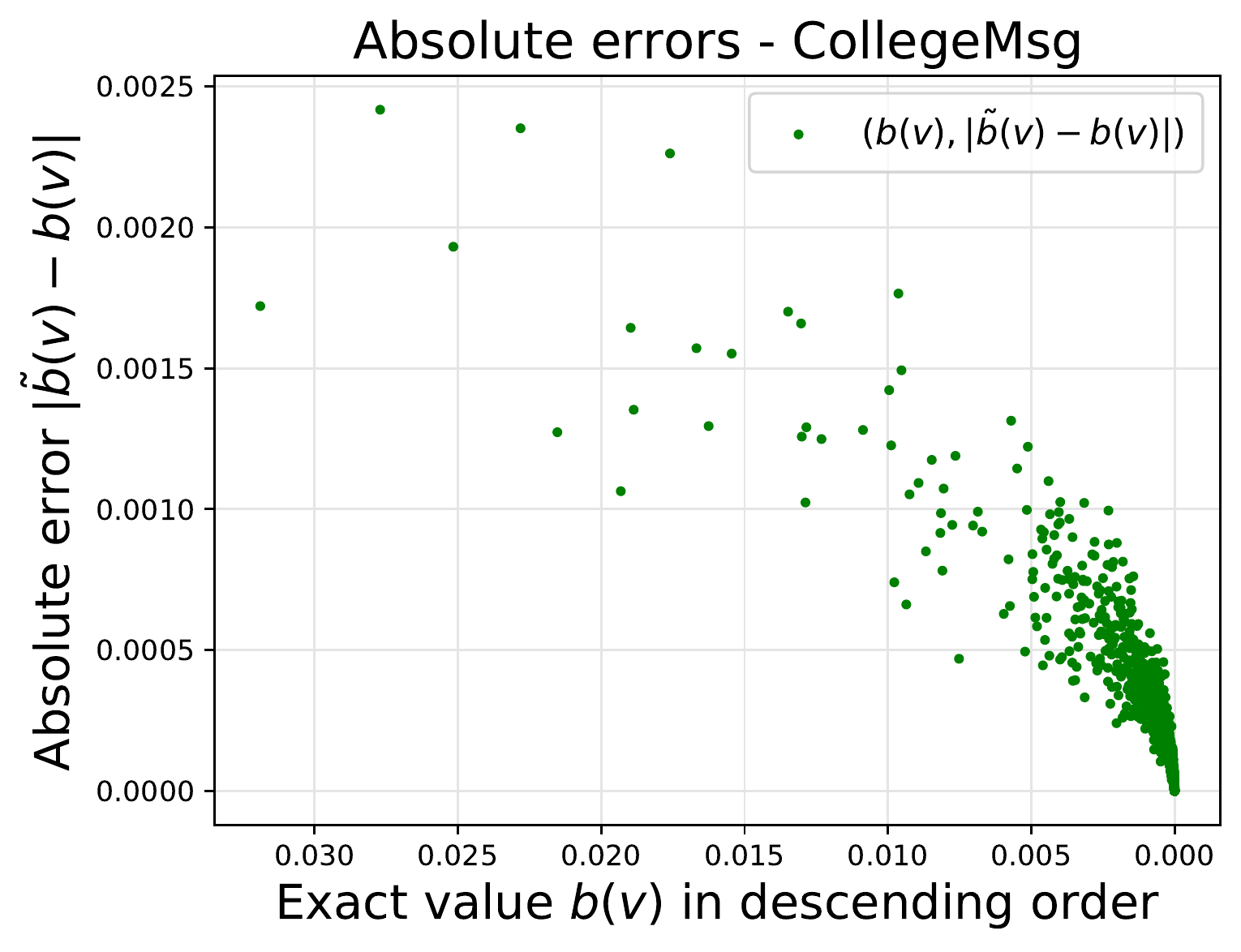} \label{subfig:errors_and_exacts_b_appendix}} \\
		\subfloat[]{\includegraphics[width=.49\linewidth]{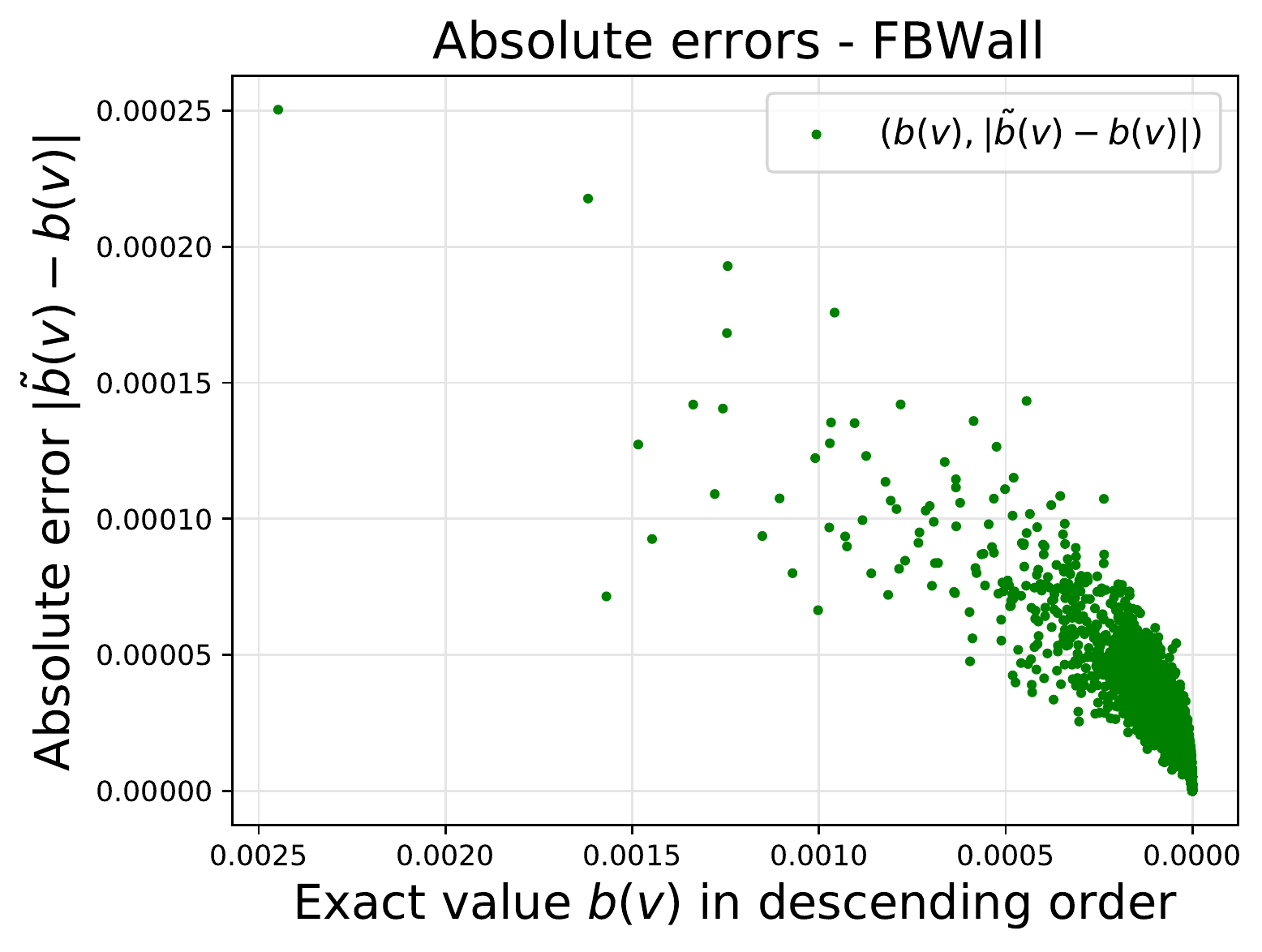} \label{subfig:errors_and_exacts_c_appendix}} \subfloat[]{\includegraphics[width=.49\linewidth]{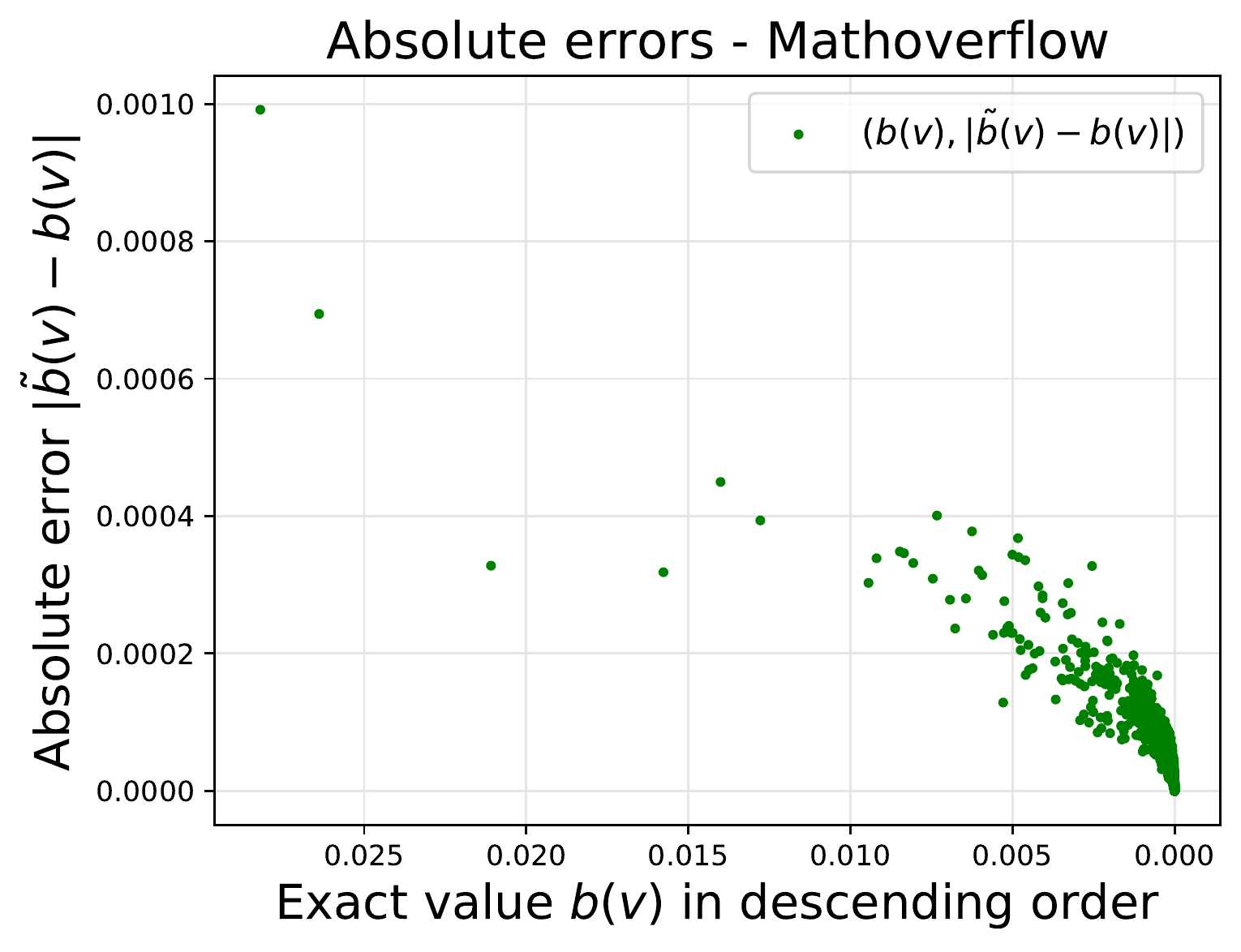} \label{subfig:errors_and_exacts_d_appendix}}
	\end{tabular}
	\caption{(\ref{subfig:topKvals_appendix}): static and temporal betweenness values of the top-50 ranked nodes of the dataset dataset; (\ref{subfig:errors_and_exacts_b_appendix}),(\ref{subfig:errors_and_exacts_c_appendix}), and (\ref{subfig:errors_and_exacts_d_appendix}): respectively for datasets \texttt{CollegeMsg}, \texttt{FBWall}, and  \texttt{Mathoverflow}, the deviations (or absolute errors) $|\tilde{b}(v)- b(v)|$ between the estimates $\tilde{b}(v)$ and the actual values $b(v)$ of the temporal betweenness centrality, for decreasing order of $b(v)$. }
	\label{fig:fig_appendix}
\end{figure}

Given $T$ and $G_T$, let $S_T^k = \{ v_1, \dots, v_k\}$  be the top-$k$ nodes ranked by their temporal betweenness values and let $S_{G_T}^k = \{ v_1', \dots, v_k'\}$  be the top-$k$ nodes ranked by their static betweenness values. We report in Table \ref{tab:topK} the Jaccard similarity $J(k) = |S_T^k \cap S_{G_T}^k|  / |S_T^k \cup S_{G_T}^k|$ for two different values of $k$.
\begin{table}[h]
    \centering
    \caption{Static vs temporal top-$k$ nodes Jaccard similarity $J(k)$. We also report the size of the intersection. }
    \label{tab:topK}
    \scalebox{0.9}{
        \begin{tabular}{ccc}
            \toprule
            Name& $J(25)$ & $J(50)$ \\
            \midrule
            HS & 0.28 (11) & 0.56 (36) \\
            FB & 0.22 (9) & 0.18 (15) \\
            \bottomrule
        \end{tabular}
    }
\end{table}

\end{document}